\pdfoutput=1
\documentclass{article}
\usepackage{arxiv}

\usepackage{graphicx}
\usepackage{hyperref}
\usepackage[utf8]{inputenc}
\usepackage[T1]{fontenc}
\usepackage{url}
\usepackage{amsfonts}
\usepackage{microtype}
\usepackage{mathptmx}
\usepackage{amsmath}
\usepackage{amsthm}
\usepackage{amssymb}
\usepackage{enumerate}

\newtheoremstyle{thm-break}
  {10pt}
  {3pt}
  {\normalfont}
  {}
  {\bfseries}
  {}
  {\newline}
  {\thmname{#1}\thmnumber{ #2}\thmnote{ (#3)}\par\vspace{0.5\baselineskip}}

\theoremstyle{thm-break}

\newtheorem{theorem}{Theorem}[section]
\newtheorem{lemma}[theorem]{Lemma}
\newtheorem{proposition}[theorem]{Proposition}
\newtheorem{corollary}[theorem]{Corollary}
\newtheorem{definition}[theorem]{Definition}
\newtheorem{example}[theorem]{Example}
\newtheorem{remark}[theorem]{Remark}

\newcommand{\QML}{\mathbf{QML}}
\newcommand{\QMLE}{\mathbf{QML}_{\mathrm{eq}}}
\newcommand{\St}{\operatorname{St}}
\newcommand{\Comp}{\operatorname{Comp}}
\newcommand{\Log}{\operatorname{Log}}
\newcommand{\llbracket}{[\![}
\newcommand{\rrbracket}{]\!]}

\title{Component Modalities of Quantum Logic}
\author{Kenji Tokuo\\
Department of Information Engineering, Oita College\\
National Institute of Technology, Japan\\
\texttt{tokuo@oita-ct.ac.jp}}
\date{July 16, 2026}

\begin{document}
\maketitle

\begin{abstract}
This paper determines the structural and proof-theoretic consequences of the forcing condition in relational quantum modal logic, under which every modal transition available at a world is also available at every world compatible with it. We prove that modal successor sets are constant on compatibility components, so boxed truth sets belong to the Boolean algebra of unions of these components. The relation holding exactly between worlds in the same component assigns to each stable proposition its greatest lower and least upper approximations by unions of components, and in hard superselection models these are exactly the approximations by central propositions. We adopt local validity for sequents with multiple conclusions to give a semantics for modal excluded middle on frames with several components. A connectedization obtained by adding one point then shows that component frames, equivalence frames satisfying the forcing condition, and connected compatibility frames with universal modal accessibility have the same logic for sequents with one conclusion. Finally, maximal consistent pairs yield a canonical model, and the calculus obtained by adding T, 4, and B is proved sound and complete for the three frame classes. These results characterize the logical scope of the forcing condition and establish a complete proof theory for component modalities.
\end{abstract}

\noindent\textbf{Keywords:} quantum logic, modal logic, orthoframe, forcing condition, component modality, completeness, local consequence

\section{Introduction}

The lattice approach to quantum logic originates with Birkhoff and von Neumann \cite{BirkhoffVonNeumann1936}. Relational semantics represents quantum negation by a reflexive symmetric relation, and stable subsets of the resulting orthoframe form an ortholattice \cite{DallaChiaraGiuntini2002,DallaChiaraGiuntiniGreechie2004,Goldblatt1974,Mittelstaedt1978}. Modal extensions of quantum and orthomodular logics have been developed through algebraic center operators, relational semantics, and applications to contextuality \cite{DomenechFreytesDeRonde2008,DomenechFreytesDeRonde2009}. Semantics based on compatibility has also been extended to broad classes of nonclassical modal logics by adding an interacting accessibility relation \cite{Holliday2022}. The forcing condition is a special case of this compatibility--accessibility framework. Its componentwise consequences form the subject of the present paper.

Modal orthologic has likewise been developed for epistemic modalities through algebraic semantics and possibility models based on compatibility \cite{HollidayMandelkern2024}. General correspondence theory for modal semantics over graphs studies modal reduction principles across relational environments \cite{ConradieEtAl2024}, while the classical theory of T, 4, B, and equivalence frames provides the standard modal background \cite{BlackburnDeRijkeVenema2001}. The present results connect these lines of research to the forcing condition introduced in earlier work on quantum modal logic \cite{Tokuo2025}.

The forcing condition imposes a substantial structural restriction. Worlds related by compatibility have identical modal successor sets, so accessibility depends only on the compatibility component of the initial world. On a connected projective Hilbert frame, every relation satisfying the forcing condition is state independent. The present paper develops this observation into a structural and proof-theoretic analysis of component modalities.

The first contribution is a classification theorem for relations satisfying the forcing condition. Every such relation is determined by one successor set for each compatibility component. It follows that boxed truth sets are unions of compatibility components and therefore lie in a canonical Boolean subalgebra of the stable set ortholattice. This result supplies the semantic basis for modal excluded middle.

The second contribution concerns the exact component relation, which relates precisely the worlds lying in a common compatibility component. Its box maps a stable proposition to the greatest union of components below it, while its dual diamond gives the least such union above it. In hard superselection models, the component Boolean algebra is the center of the product of sector lattices. The component box and diamond then agree with the corresponding central interior and cover.

The third contribution is a logic coincidence theorem. The logic for sequents with one conclusion over the exact component relation agrees with the logic over equivalence relations satisfying the forcing condition. A connectedization construction further transfers countermodels to connected compatibility frames with universal modal accessibility. This construction adds one compatibility point and preserves the stable operations needed to interpret every formula. This result permits a proof theory for component semantics even though T, 4, and B do not characterize that frame class directly.

The fourth contribution is canonical completeness. The calculus uses sequents with multiple conclusions, while the completeness statement concerns one conclusion. We adopt local validity, under which the selected conclusion may depend on the world. This semantic choice is required by modal excluded middle on frames with several compatibility components. Maximal consistent pairs then provide the saturated canonical worlds needed for quantum negation and modal accessibility. Adding T, 4, and B makes the canonical modal relation an equivalence relation.

The paper is organized as follows. Section~\ref{sec:preliminaries} recalls orthoframe semantics, local sequent validity, and the base calculus. Section~\ref{sec:forcing-structure} proves the structural classification of the forcing condition. Section~\ref{sec:component-modality} studies the component modality and hard superselection models. Section~\ref{sec:logic-coincidence} establishes the coincidence of the three frame logics. Section~\ref{sec:proof-theory} proves soundness and completeness. Section~\ref{sec:comparison} places the results in their algebraic setting and describes the expressive boundary of the forcing condition.

\section{Preliminaries}
\label{sec:preliminaries}

The semantic and proof-theoretic background for the paper is presented here. This section covers stable sets on orthoframes, quantum modal structures, local validity for sequents with multiple conclusions, and the base calculus $\QML$.

\subsection{Stable Sets on Orthoframes}

We begin with the orthoframe semantics of quantum negation and the stable sets that serve as propositions throughout the paper.

\begin{definition}[Orthoframe]
An \emph{orthoframe} is a pair $(W,R_Q)$ such that $W$ is a nonempty set and $R_Q$ is a reflexive and symmetric relation on $W$.
\end{definition}

The relation $R_Q$ is interpreted as compatibility or nonorthogonality. For $X\subseteq W$, define \(X^{\perp}:=\{w\in W\mid R_Q(w,x)\text{ does not hold for any }x\in X\}.\)
Because $R_Q$ is symmetric, the operation $X\mapsto X^{\perp}$ is antitone and satisfies $X\subseteq X^{\perp\perp}$.

\begin{definition}[Stable Set]
Let $(W,R_Q)$ be an orthoframe. A subset $X\subseteq W$ is \emph{stable} if $X=X^{\perp\perp}$. We write $\St(W,R_Q)$ for the family of stable subsets of $W$.
\end{definition}

The family $\St(W,R_Q)$ is a complete ortholattice. Meets are intersections, orthocomplementation is $X\mapsto X^{\perp}$, and joins are given by \(\bigvee_{\lambda\in\Lambda}X_\lambda=\left(\bigcup_{\lambda\in\Lambda}X_\lambda\right)^{\perp\perp}.\)
The bottom and top elements are $\varnothing$ and $W$.

The following closure property will be used to verify the stability of truth sets under conjunction.

\begin{lemma}[Closure under Intersections]
\label{lem:stable-intersection}
Arbitrary intersections of stable sets are stable.
\end{lemma}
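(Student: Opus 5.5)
The plan is to use only the two facts recorded just before the statement: the operation $X\mapsto X^{\perp}$ is antitone, and every $X$ satisfies $X\subseteq X^{\perp\perp}$. Let $\{X_\lambda\}_{\lambda\in\Lambda}$ be a family of stable subsets of $W$ and put $Y:=\bigcap_{\lambda\in\Lambda}X_\lambda$. Since $Y\subseteq Y^{\perp\perp}$ holds for every subset, it suffices to prove the reverse inclusion $Y^{\perp\perp}\subseteq Y$.

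First I will fix an index $\lambda$. From $Y\subseteq X_\lambda$, applying antitonicity twice gives $Y^{\perp\perp}\subseteq X_\lambda^{\perp\perp}$. Stability of $X_\lambda$ means $X_\lambda^{\perp\perp}=X_\lambda$, so $Y^{\perp\perp}\subseteq X_\lambda$. Since $\lambda$ was arbitrary, $Y^{\perp\perp}\subseteq\bigcap_\lambda X_\lambda=Y$, and hence $Y=Y^{\perp\perp}$.

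The only point needing care is the empty family. Under the convention that the empty intersection is $W$, I will check directly that $W$ is stable. Reflexivity of $R_Q$ gives $W^{\perp}=\varnothing$, because every $w\in W$ is related to itself. Then $\varnothing^{\perp}=W$ holds vacuously, so $W^{\perp\perp}=W$.

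There is no real obstacle here. If desired, antitonicity itself can be verified in one line: for $A\subseteq B$, any $w$ unrelated to all of $B$ is in particular unrelated to all of $A$, so $B^{\perp}\subseteq A^{\perp}$. This also confirms that meets in $\St(W,R_Q)$ are intersections, as stated before the lemma.
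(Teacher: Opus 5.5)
Your proof is correct and follows the paper's argument: $Y\subseteq Y^{\perp\perp}$ always holds, and applying antitonicity twice to $Y\subseteq X_\lambda$ gives $Y^{\perp\perp}\subseteq X_\lambda^{\perp\perp}=X_\lambda$ for each $\lambda$, hence $Y^{\perp\perp}\subseteq Y$. Your separate treatment of the empty family is correct but not needed, since the general argument already gives $Y^{\perp\perp}\subseteq W=Y$ when $\Lambda=\varnothing$.
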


\begin{proof}
Let $(X_\lambda)_{\lambda\in\Lambda}$ be a family of stable sets and put $X=\bigcap_{\lambda\in\Lambda}X_\lambda$. Since every set is contained in its double orthocomplement, $X\subseteq X^{\perp\perp}$. To prove the reverse inclusion, fix $w\in X^{\perp\perp}$. The inclusion $X\subseteq X_\lambda$ implies $X_\lambda^{\perp}\subseteq X^{\perp}$. Hence $X^{\perp\perp}\subseteq X_\lambda^{\perp\perp}=X_\lambda$ for every $\lambda\in\Lambda$. Thus $w\in\bigcap_{\lambda\in\Lambda}X_\lambda=X$.
\end{proof}

\subsection{Quantum Modal Structures}

This subsection introduces the language and relational structures used throughout the paper and states the forcing condition connecting quantum compatibility with modal accessibility.

The language contains countably many atomic formulas, conjunction $\wedge$, quantum negation $\neg$, and one box operator $\Box$. Its semantics uses two relations with distinct roles. The relation $R_Q$ represents compatibility or nonorthogonality and determines the interpretation of quantum negation, while $R_M$ represents modal accessibility and determines the interpretation of the box operator. The forcing condition specifies the interaction between these relations. This semantic framework follows the relational tradition for orthologic and can be compared with more general representation frameworks for modal nonclassical logics \cite{Goldblatt1974,Holliday2022}.

Formulas are denoted by $\alpha,\beta,\gamma$. Throughout the paper, $\forall$, $\exists$, $\Longrightarrow$, and $\Longleftrightarrow$ are used in the metalanguage as abbreviations for ``for every,'' ``there exists,'' ``implies,'' and ``if and only if,'' respectively. Worlds are denoted by $u$, $v$, and $w$. Component indices are denoted by $i$, $j$, and $k$, with $I$ as their index set. Stable subsets are denoted by $X$ and $Y$, while $A$ and $B$ denote sets of formulas in the canonical construction. For a binary relation $R$ on a set and an element $u$ of that set, write \(R(u,-):=\{v\mid R(u,v)\}\). We use the abbreviations \(\alpha\vee\beta:=\neg(\neg\alpha\wedge\neg\beta), \Diamond\alpha:=\neg\Box\neg\alpha.\)

\begin{definition}[Forcing Condition]
Let $(W,R_Q)$ be an orthoframe and let $R_M\subseteq W\times W$ be a modal accessibility relation. The relation $R_M$ satisfies the \emph{forcing condition} relative to $R_Q$ if
\[
R_M(u,w)\ \Longrightarrow\ \forall v\in W\,\bigl(R_Q(u,v)\Longrightarrow R_M(v,w)\bigr).
\]
\end{definition}

This condition was introduced in \cite{Tokuo2025}. Combining it with an orthoframe and stable atomic valuations gives the structures used for the semantics.

\begin{definition}[Quantum Modal Structure]
A \emph{quantum modal structure} is a tuple
\[
\mathcal{M}=(W,R_Q,R_M,\rho)
\]
such that $(W,R_Q)$ is an orthoframe, $R_M$ satisfies the forcing condition relative to $R_Q$, and $\rho(p)\in\St(W,R_Q)$ for every atomic formula $p$.
\end{definition}

Truth is defined by
\begin{align*}
w\models p &\quad\Longleftrightarrow\quad w\in\rho(p),\\
w\models\alpha\wedge\beta &\quad\Longleftrightarrow\quad w\models\alpha\text{ and }w\models\beta,\\
w\models\neg\alpha &\quad\Longleftrightarrow\quad
\forall v\,\bigl(R_Q(w,v)\Longrightarrow v\not\models\alpha\bigr),\\
w\models\Box\alpha &\quad\Longleftrightarrow\quad
\forall v\,\bigl(R_M(w,v)\Longrightarrow v\models\alpha\bigr).
\end{align*}
We write $\llbracket\alpha\rrbracket_{\mathcal{M}}$ for the truth set of $\alpha$ in $\mathcal{M}$. When the structure is clear from the context, the subscript is omitted.

The semantic clauses must preserve stability so that every formula continues to denote a proposition in the stable set ortholattice.

\begin{proposition}[Stability of Formula Truth Sets]
\label{prop:formula-stability}
Every formula has a stable truth set in every quantum modal structure.
\end{proposition}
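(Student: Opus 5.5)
We argue by structural induction on formulas, showing at each stage that $\llbracket\alpha\rrbracket=\llbracket\alpha\rrbracket^{\perp\perp}$. The atomic case holds because the definition of a quantum modal structure requires $\rho(p)\in\St(W,R_Q)$. For conjunction, the clause gives $\llbracket\alpha\wedge\beta\rrbracket=\llbracket\alpha\rrbracket\cap\llbracket\beta\rrbracket$, which is stable by the induction hypothesis and Lemma~\ref{lem:stable-intersection}. For negation, the clause for $\neg$ says exactly that $\llbracket\neg\alpha\rrbracket=\llbracket\alpha\rrbracket^{\perp}$. Any set of the form $X^{\perp}$ is stable, since applying the antitone map $(\cdot)^{\perp}$ to $X\subseteq X^{\perp\perp}$ gives $X^{\perp\perp\perp}\subseteq X^{\perp}$, while $X^{\perp}\subseteq X^{\perp\perp\perp}$ is an instance of the general inclusion. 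This case does not even need the induction hypothesis.

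The box case is the main step, and it is the only place where the forcing condition is used. Put $X=\llbracket\alpha\rrbracket$, which is stable by hypothesis, and $Y=\llbracket\Box\alpha\rrbracket=\{w\mid R_M(w,-)\subseteq X\}$. The plan is to show that $Y$ is closed under $R_Q$: if $w\in Y$ and $R_Q(w,v)$, then $v\in Y$.

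Suppose $R_M(v,u)$. The forcing condition applied to the pair $(v,u)$ gives $R_M(v',u)$ for every $v'$ with $R_Q(v,v')$. Since $R_Q$ is symmetric, $R_Q(v,w)$ holds, so $R_M(w,u)$. Because $w\in Y$, this yields $u\in X$. Hence $R_M(v,-)\subseteq X$, that is, $v\in Y$.

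It remains to show that an $R_Q$-closed set $Y$ is stable. Take $w\in Y^{\perp\perp}$ and suppose $w\notin Y$. By reflexivity, $R_Q(w,w)$ holds, and every $R_Q$-neighbour $v$ of $w$ lies outside $Y$: if $v\in Y$, then symmetry and closure would put $w\in Y$. Hence $w\in Y^{\perp}$. Combined with $w\in Y^{\perp\perp}$, this forces $w$ to be incompatible with itself, contradicting reflexivity. So $Y^{\perp\perp}\subseteq Y$, and the reverse inclusion always holds.

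This argument does not use the stability of $X$. It shows that boxed truth sets are unions of compatibility components, which anticipates the structural results of Section~\ref{sec:forcing-structure}. The only delicate point is applying the forcing condition in the correct direction, with symmetry of $R_Q$ supplying the needed orientation; everything else is routine.
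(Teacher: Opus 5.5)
Your proof is correct and follows the paper's argument. It is a structural induction in which the box case is settled by showing that a world failing $\Box\alpha$ has only $R_Q$-neighbours failing $\Box\alpha$, so reflexivity excludes it from $\llbracket\Box\alpha\rrbracket^{\perp\perp}$. The only difference is cosmetic: you first prove $R_Q$-closure of the truth set and then pass to its complement via symmetry, whereas the paper applies the forcing condition directly at the failing world $w$ (to an $R_M$-successor refuting $\alpha$), which shows the complement is closed without invoking symmetry.
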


\begin{proof}
Atomic truth sets are stable by definition. Intersections preserve stability by Lemma~\ref{lem:stable-intersection}. The truth set of $\neg\alpha$ is $\llbracket\alpha\rrbracket^{\perp}$, which is stable because $X^{\perp\perp\perp}=X^{\perp}$. It remains to consider $\Box\alpha$.

Put $X=\llbracket\Box\alpha\rrbracket$. The inclusion $X\subseteq X^{\perp\perp}$ holds for every set. Suppose $w\in X^{\perp\perp}$ but $w\notin X$. Then some $u$ satisfies $R_M(w,u)$ and $u\not\models\alpha$. The forcing condition gives $R_M(v,u)$ for every $v$ with $R_Q(w,v)$. Hence every such $v$ fails $\Box\alpha$. Thus every $R_Q$-neighbor of $w$ lies outside $X$, so $w\in X^{\perp}$. Reflexivity of $R_Q$ then gives $w\notin X^{\perp\perp}$, a contradiction. Therefore $w\in X$.
\end{proof}

Having established the semantic closure of the language, we now formulate the corresponding notion of sequent validity.

\subsection{Local Sequent Validity}

This subsection introduces the local interpretation of sequents with multiple conclusions. A sequent is an expression $\Gamma\vdash\Delta$, where $\Gamma$ and $\Delta$ are finite sets of formulas. Under the local interpretation, each world $w$ satisfying every formula in $\Gamma$ satisfies at least one formula in $\Delta$, and the satisfied formula may vary with $w$. The notation $w\models\Gamma$ means that $w$ satisfies every formula in $\Gamma$.

\begin{definition}[Local Validity]
A sequent $\Gamma\vdash\Delta$ is \emph{valid in} $\mathcal{M}$ if, for every $w\in W$,
\[
w\models\Gamma\quad\Longrightarrow\quad
\exists\delta\in\Delta\,\bigl(w\models\delta\bigr).
\]
It is valid in a class of structures if it is valid in every member of that class.
\end{definition}

For an arbitrary set $\Gamma$ of formulas, write $\Gamma\models_{\mathcal{M}}\alpha$ if every world of $\mathcal{M}$ satisfying all formulas in $\Gamma$ also satisfies $\alpha$. For a class $\mathsf{K}$ of structures, write $\Gamma\models_{\mathsf{K}}\alpha$ if $\Gamma\models_{\mathcal{M}}\alpha$ for every $\mathcal{M}\in\mathsf{K}$. When the structure or class is clear, the subscript is omitted. The local interpretation of multiple conclusions allows the selected conclusion to depend on the world. This distinction is essential for the modal excluded middle axiom MEM, $\Gamma\vdash\Box\alpha,\neg\Box\alpha$, displayed with the calculus in Section~\ref{subsec:base-calculus}.

\subsection{Local Consequence for Modal Excluded Middle}

Earlier work formulated consequence for sequents with multiple conclusions by requiring one fixed member of the succedent to follow from the antecedent throughout a model \cite{Tokuo2025}. The soundness argument for modal excluded middle, given at the end of Section~\ref{subsec:base-calculus}, establishes a conclusion selected at each world, and frames with several compatibility components make the distinction substantive. This paper therefore adopts local validity as the formal semantics of the calculus. The following example separates the two consequence relations.

\begin{example}[Modal Alternatives across Components]
Let $W=C_0\sqcup C_1$, where $C_0$ and $C_1$ are distinct connected components of the graph determined by $R_Q$; the formal definition appears in Section~\ref{sec:forcing-structure}. Define $R_C(u,v)$ to hold exactly when $u$ and $v$ belong to the same $R_Q$-component, and put $R_M=R_C$. Put $\rho(p)=C_0$, which is stable because no $R_Q$-edge leaves a component. Then every world in $C_0$ satisfies $\Box p$, while every world in $C_1$ satisfies $\neg\Box p$. Hence the sequent $\varnothing\vdash\Box p,\neg\Box p$ is locally valid. Neither $\Box p$ nor $\neg\Box p$ is valid throughout the model.
\end{example}

\begin{remark}[Revision of Consequence for Multiple Conclusions]
The local clause replaces the interpretation used in the earlier presentation of the calculus $\QML$ \cite{Tokuo2025}, recalled in Section~\ref{subsec:base-calculus}, which requires one fixed conclusion throughout a model. Under that interpretation, the preceding example invalidates MEM. The proof of MEM soundness establishes the local clause adopted here. Both interpretations coincide for sequents with one conclusion, so the consequence relation used in the main completeness theorem is unchanged.
\end{remark}

The example and remark show that MEM expresses pointwise Boolean behavior of boxed formulas. The canonical semantics below is built for this local consequence relation.

\subsection{Base Calculus}
\label{subsec:base-calculus}

The base calculus $\QML$ extends Nishimura's sequent calculus for quantum logic \cite{Nishimura1980} by adding the axiom MEM and the rule K. Its axioms and rules are as follows.

\begin{align*}
&\text{(AX)} && \alpha\vdash\alpha,\\
&\text{(MEM)} && \Gamma\vdash\Box\alpha,\neg\Box\alpha,\\[1mm]
&\text{(WKN)} && \frac{\Gamma\vdash\Delta}{\Pi,\Gamma\vdash\Delta,\Sigma},\\[2mm]
&\text{(CUT)} && \frac{\Gamma_1\vdash\Delta_1,\alpha\qquad
\alpha,\Gamma_2\vdash\Delta_2}
{\Gamma_1,\Gamma_2\vdash\Delta_1,\Delta_2},\\[2mm]
&\text{($\wedge$l$_1$)} && \frac{\alpha,\Gamma\vdash\Delta}{\alpha\wedge\beta,\Gamma\vdash\Delta},\\[2mm]
&\text{($\wedge$l$_2$)} && \frac{\beta,\Gamma\vdash\Delta}{\alpha\wedge\beta,\Gamma\vdash\Delta},\\[2mm]
&\text{($\wedge$r)} && \frac{\Gamma\vdash\Delta,\alpha\qquad\Gamma\vdash\Delta,\beta}
{\Gamma\vdash\Delta,\alpha\wedge\beta},\\[2mm]
&\text{($\neg$l)} && \frac{\Gamma\vdash\Delta,\alpha}{\neg\alpha,\Gamma\vdash\Delta},\\[2mm]
&\text{($\neg$r)} && \frac{\alpha\vdash\Delta}{\neg\Delta\vdash\neg\alpha},\\[2mm]
&\text{($\neg\neg$l)} && \frac{\alpha,\Gamma\vdash\Delta}{\neg\neg\alpha,\Gamma\vdash\Delta},\\[2mm]
&\text{($\neg\neg$r)} && \frac{\Gamma\vdash\Delta,\alpha}{\Gamma\vdash\Delta,\neg\neg\alpha},\\[2mm]
&\text{(K)} && \frac{\Gamma\vdash\alpha}{\Box\Gamma\vdash\Box\alpha}.
\end{align*}
Here $\neg\Delta=\{\neg\delta\mid\delta\in\Delta\}$ and $\Box\Gamma=\{\Box\gamma\mid\gamma\in\Gamma\}$. The modal additions are the axiom MEM and the rule K. The remaining axioms and rules are Nishimura's.

By induction on derivations, every sequent derivable in $\QML$ is locally valid in every quantum modal structure. Appendix~\ref{app:base-soundness} verifies the nonmodal cases. The K rule uses the modal truth clause. For MEM, fix a world $w$. If $w\models\Box\alpha$, the left conclusion holds. If it fails, choose an $R_M$-successor $u$ with $u\not\models\alpha$. The forcing condition supplies the same successor from every $R_Q$-neighbor of $w$, so every such neighbor fails $\Box\alpha$. Hence $w\models\neg\Box\alpha$.

To obtain a calculus for structures in which the modal accessibility relation is an equivalence relation, we add the following axioms to $\QML$:
\begin{align*}
&\text{(T)} && \Box\alpha\vdash\alpha,\\
&\text{(4)} && \Box\alpha\vdash\Box\Box\alpha,\\
&\text{(B)} && \alpha\vdash\Box\Diamond\alpha.
\end{align*}
These axioms correspond to reflexivity, transitivity, and symmetry of the modal accessibility relation, respectively. The resulting calculus is denoted by $\QMLE$.

We write $\Gamma\vdash_{\QML}\Delta$ and $\Gamma\vdash_{\QMLE}\Delta$ when the sequent $\Gamma\vdash\Delta$ is derivable in $\QML$ and $\QMLE$, respectively. For an arbitrary set $\Gamma$ of formulas, $\Gamma\vdash_{\QML}\alpha$ and $\Gamma\vdash_{\QMLE}\alpha$ mean that $\Gamma_0\vdash_{\QML}\alpha$ and $\Gamma_0\vdash_{\QMLE}\alpha$, respectively, for some finite $\Gamma_0\subseteq\Gamma$.

\section{Structure of the Forcing Condition}
\label{sec:forcing-structure}

We now examine the structural consequences of the forcing condition. Modal successor sets are first shown to be constant on compatibility components, after which the Boolean algebra of component propositions and the truth sets of boxed formulas are studied.

\subsection{Component Invariance}

Let $\sim_Q$ be the equivalence relation generated by $R_Q$. Thus $u\sim_Qv$ precisely when there is a finite $R_Q$-path from $u$ to $v$. The equivalence classes of $\sim_Q$ are called the $R_Q$-components, and we write $[u]_Q$ for the component containing $u$. An orthoframe $(W,R_Q)$ is \emph{connected} if it has exactly one $R_Q$-component.

\begin{lemma}[Edge Invariance]
\label{lem:edge-invariance}
Suppose that $R_M$ satisfies the forcing condition relative to $R_Q$. If $R_Q(u,v)$, then
\[
R_M(u,-)=R_M(v,-).
\]
\end{lemma}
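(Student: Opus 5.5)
The plan is to prove the two inclusions separately, using the forcing condition once in each direction together with the symmetry of $R_Q$.

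\textbf{First inclusion.} Fix $u,v$ with $R_Q(u,v)$, and show $R_M(u,-)\subseteq R_M(v,-)$. Let $w\in R_M(u,-)$, so $R_M(u,w)$ holds. The forcing condition, applied to the pair $(u,w)$, yields $R_M(v',w)$ for every $v'$ with $R_Q(u,v')$. Taking $v'=v$ gives $R_M(v,w)$, that is, $w\in R_M(v,-)$.

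\textbf{Reverse inclusion.} Since $R_Q$ is symmetric, $R_Q(u,v)$ implies $R_Q(v,u)$. The argument above then applies with the roles of $u$ and $v$ interchanged and gives $R_M(v,-)\subseteq R_M(u,-)$. The two inclusions together give $R_M(u,-)=R_M(v,-)$.

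\textbf{Remarks.} There is no real obstacle here. The only point to keep in view is that the forcing condition transfers successors in one direction only: from a world to its $R_Q$-neighbours. The symmetry assumption in the definition of an orthoframe is exactly what makes the statement an equality rather than just an inclusion. Reflexivity of $R_Q$ is not needed.

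Once the lemma is in hand, I expect the paper to extend it along finite $R_Q$-paths by induction on path length. This would show that $R_M(u,-)$ depends only on the component $[u]_Q$.
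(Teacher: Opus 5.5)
Your proposal is correct and matches the paper's proof: one inclusion comes directly from the forcing condition applied to $R_Q(u,v)$, and the reverse inclusion comes from the same argument after using symmetry of $R_Q$ to get $R_Q(v,u)$. Your anticipated extension along finite $R_Q$-paths is also how the paper proves Theorem~\ref{thm:component-invariance}.
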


\begin{proof}
Let $w\in W$. If $R_M(u,w)$, the forcing condition and $R_Q(u,v)$ give $R_M(v,w)$. Conversely, $R_Q(v,u)$ follows from symmetry, so $R_M(v,w)$ gives $R_M(u,w)$. Thus the successor sets are equal.
\end{proof}

The equality along a single compatibility edge extends along finite paths. This yields the componentwise classification of modal successor sets.

\begin{theorem}[Component Invariance]
\label{thm:component-invariance}
A relation $R_M$ satisfies the forcing condition relative to $R_Q$ if and only if its successor set is constant on every $R_Q$-component. Equivalently, there is a family $(D_C)_{C\in W/{\sim_Q}}$ of subsets of $W$ such that
\[
R_M(u,v)\quad\Longleftrightarrow\quad v\in D_{[u]_Q}.
\]
\end{theorem}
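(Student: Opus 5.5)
The plan is to prove the two directions of the first equivalence separately and then show that constancy on components is the same as the existence of the family $(D_C)$.

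\emph{Forcing condition implies constancy.} Assume $R_M$ satisfies the forcing condition and let $u\sim_Q v$. By the definition of $\sim_Q$ there is a finite path $u=w_0,w_1,\dots,w_n=v$ with $R_Q(w_{k},w_{k+1})$ for each $k<n$; the case $n=0$ is trivial. Lemma~\ref{lem:edge-invariance} gives $R_M(w_k,-)=R_M(w_{k+1},-)$ for each $k$. Induction on $n$ then yields $R_M(u,-)=R_M(v,-)$. Hence the successor set is constant on every $R_Q$-component.

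\emph{Constancy implies the forcing condition.} Assume $R_M(u,-)$ depends only on $[u]_Q$. Suppose $R_M(u,w)$ and $R_Q(u,v)$. Then $u\sim_Q v$, so $R_M(v,-)=R_M(u,-)\ni w$, which gives $R_M(v,w)$.

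\emph{Equivalence with the family $(D_C)$.} Given constancy, define $D_C:=R_M(u,-)$ for any $u\in C$. This is well defined by constancy, and it gives $R_M(u,v)\Leftrightarrow v\in D_{[u]_Q}$. Conversely, if such a family exists, then $R_M(u,-)=D_{[u]_Q}$, which is visibly constant on components. Choosing a representative of $C$ to define $D_C$ uses only that $C$ is nonempty; no form of choice is needed, since every representative gives the same set.

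No step is a genuine obstacle. The only point requiring care is that Lemma~\ref{lem:edge-invariance} already uses the symmetry of $R_Q$, so the path argument needs no orientation of the edges. The paths are finite because $\sim_Q$ is generated as the reflexive–transitive closure of the symmetric relation $R_Q$.
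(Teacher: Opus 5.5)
Your proposal is correct and follows essentially the same route as the paper. You chain Lemma~\ref{lem:edge-invariance} along a finite $R_Q$-path, define $D_C$ from any representative, and get the converse from $R_Q(u,v)\Rightarrow[u]_Q=[v]_Q$; your explicit check that a family $(D_C)$ gives constancy back is a small addition that the paper leaves implicit.
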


\begin{proof}
Assume the forcing condition. If $u\sim_Qv$, choose a finite path $u=u_0,u_1,\ldots,u_n=v$ with $R_Q(u_k,u_{k+1})$ for every $k<n$. Repeated application of Lemma~\ref{lem:edge-invariance} gives $R_M(u,-)=R_M(v,-)$. Define $D_C=R_M(u,-)$ for any $u\in C$. This is well defined and yields the displayed equivalence.

Assume now the successor set is constant on components. If $R_M(u,w)$ and $R_Q(u,v)$, then $[u]_Q=[v]_Q$. Hence $R_M(v,-)=R_M(u,-)$, so $R_M(v,w)$.
\end{proof}

The connected case is the immediate specialization of the classification, since there is only one compatibility component.

\begin{corollary}[Constant Successors on Connected Frames]
\label{cor:connected-constant}
If $(W,R_Q)$ is connected, every relation satisfying the forcing condition has the form
\[
R_M=W\times D
\]
for a fixed subset $D\subseteq W$.
\end{corollary}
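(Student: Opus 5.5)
The corollary follows directly from Theorem~\ref{thm:component-invariance}, so the plan is to specialize that classification to the one-component case. Assume $(W,R_Q)$ is connected and $R_M$ satisfies the forcing condition relative to $R_Q$. Theorem~\ref{thm:component-invariance} provides a family $(D_C)_{C\in W/{\sim_Q}}$ with $R_M(u,v)\Longleftrightarrow v\in D_{[u]_Q}$. Connectedness means $W/{\sim_Q}$ has exactly one element, namely $W$ itself, since $W$ is nonempty. Hence $[u]_Q=W$ for every $u\in W$, and I will set $D:=D_W$.

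The key step is to verify $R_M=W\times D$ as sets of pairs. For any $(u,v)\in W\times W$, the displayed equivalence gives $R_M(u,v)\Longleftrightarrow v\in D_{[u]_Q}=D$. Since $u$ ranges over all of $W$ without restriction, this is equivalent to $(u,v)\in W\times D$.

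To make the argument independent of the indexing family, I would also note that $D$ can be described directly as $R_M(u_0,-)$ for any fixed $u_0\in W$. Lemma~\ref{lem:edge-invariance}, iterated along $R_Q$-paths, guarantees that this choice does not depend on $u_0$.

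There is no genuine obstacle here. The only point needing care is that connectedness, as defined, means exactly one component. This uses nonemptiness of $W$, which is part of the definition of an orthoframe, so the family $(D_C)$ collapses to a single set $D$.
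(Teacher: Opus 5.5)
Your proof is correct and follows the paper's own argument: a connected frame has exactly one $R_Q$-component, so Theorem~\ref{thm:component-invariance} supplies a single successor set $D$ shared by every world, which gives $R_M=W\times D$. Your extra remarks, namely the nonemptiness of $W$ and the direct description $D=R_M(u_0,-)$, only make explicit details that the paper leaves implicit.
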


\begin{proof}
There is only one $R_Q$-component, so Theorem~\ref{thm:component-invariance} supplies one successor set $D$ shared by every world.
\end{proof}

\subsection{Component Boolean Algebra}

Let $(C_i)_{i\in I}$ be the family of $R_Q$-components. A subset of $W$ is a \emph{component proposition} if it is a union of members of this family.

\begin{lemma}[Stability of Component Propositions]
\label{lem:component-set-stable}
Every component proposition $X$ is stable, and
\[
X^{\perp}=W\setminus X.
\]
\end{lemma}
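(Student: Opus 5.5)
The plan is to first compute $X^{\perp}$ exactly and then obtain stability by applying the same computation twice. Let $X=\bigcup_{i\in J}C_i$ for some $J\subseteq I$. The key observation is that $R_Q$-edges never leave a component: if $R_Q(w,x)$ then $w\sim_Q x$, so $[w]_Q=[x]_Q$. This follows directly from the definition of $\sim_Q$ as the equivalence relation generated by $R_Q$.

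For the inclusion $W\setminus X\subseteq X^{\perp}$, take $w\notin X$. If some $x\in X$ satisfied $R_Q(w,x)$, then $w$ would lie in the component of $x$, which is contained in $X$, a contradiction. Hence $w$ is $R_Q$-related to no member of $X$, so $w\in X^{\perp}$.

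For the reverse inclusion $X^{\perp}\subseteq W\setminus X$, take $w\in X$. Reflexivity of $R_Q$ gives $R_Q(w,w)$ with $w\in X$, so $w\notin X^{\perp}$. Together the two inclusions give $X^{\perp}=W\setminus X$.

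For stability, note that $W\setminus X=\bigcup_{i\in I\setminus J}C_i$ is again a component proposition, because the components partition $W$. Applying the identity just proved to $W\setminus X$ gives
\[
X^{\perp\perp}=(W\setminus X)^{\perp}=W\setminus(W\setminus X)=X,
\]
so $X$ is stable. No step is a real obstacle. The only point needing care is that the complement of a union of components is itself a union of components, which holds because the $R_Q$-components partition $W$.
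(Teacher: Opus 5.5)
Your proposal is correct and follows the paper's proof essentially step for step. Reflexivity gives $X\cap X^{\perp}=\varnothing$, the absence of $R_Q$-edges between distinct components gives $W\setminus X\subseteq X^{\perp}$, and stability follows by applying the identity a second time. Your explicit remark that $W\setminus X$ is again a component proposition simply spells out what the paper compresses into ``double orthocomplementation returns $X$.''
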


\begin{proof}
If $w\in X$, reflexivity gives $R_Q(w,w)$, so $w\notin X^{\perp}$. If $w\notin X$, then $w$ lies in a component disjoint from every component included in $X$. No $R_Q$-edge joins distinct components, hence $w$ is unrelated to every member of $X$, and $w\in X^{\perp}$. Thus $X^{\perp}=W\setminus X$, and double orthocomplementation returns $X$.
\end{proof}

We now use the separation of compatibility components to decompose the entire stable set ortholattice into its component factors.

\begin{proposition}[Product Decomposition]
\label{prop:product-decomposition}
Restriction to components defines an ortholattice isomorphism
\[
\St(W,R_Q)\cong\prod_{i\in I}\St(C_i,R_Q|_{C_i}).
\]
Under this isomorphism, component propositions correspond to tuples whose coordinates are either the bottom or the top of the relevant factor.
\end{proposition}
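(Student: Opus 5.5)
The plan is to define $\Phi(X)=(X\cap C_i)_{i\in I}$ and a candidate inverse $\Psi((X_i)_{i\in I})=\bigcup_{i\in I}X_i$, and check four things in turn: $\Phi$ lands in the product, $\Psi$ lands in $\St(W,R_Q)$, the two maps are mutually inverse, and $\Phi$ preserves order and orthocomplement. The single tool throughout is a local formula for orthocomplements. Write $\perp_i$ for the orthocomplement in the factor $(C_i,R_Q|_{C_i})$. For any $Y\subseteq W$,
\[
Y^{\perp}\cap C_i=(Y\cap C_i)^{\perp_i}.
\]
This holds because a world $w\in C_i$ is $R_Q$-related only to members of $C_i$, since no $R_Q$-edge joins distinct components. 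So whether $w\in Y^{\perp}$ depends only on $Y\cap C_i$, and the relevant neighbours are computed inside $C_i$.

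First, applying the formula twice gives $X^{\perp\perp}\cap C_i=(X\cap C_i)^{\perp_i\perp_i}$. If $X$ is stable, then $X\cap C_i$ is stable in the factor, so $\Phi$ is well defined. Next, for a tuple of stable $X_i\subseteq C_i$, put $Y=\bigcup_i X_i$. Then $Y\cap C_i=X_i$ because the components are disjoint, and the same formula gives $Y^{\perp\perp}\cap C_i=X_i$. Since the components cover $W$, this yields $Y^{\perp\perp}=Y$, so $Y$ is stable. The compositions $\Psi\Phi$ and $\Phi\Psi$ are the identity because the $C_i$ partition $W$.

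Both $\Phi$ and $\Psi$ are monotone, so $\Phi$ is an order isomorphism. It therefore preserves all meets and joins, and those are determined by the order. The local formula with $Y=X$ gives $\Phi(X^{\perp})=(\Phi(X)_i^{\perp_i})_i$, so $\Phi$ also preserves orthocomplementation. Hence $\Phi$ is an ortholattice isomorphism.

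For the final clause, suppose $X$ is a union of components. Then each $X\cap C_i$ is either $\varnothing$ or $C_i$, which are the bottom and top of the factor. Conversely, if every coordinate of $\Phi(X)$ is $\varnothing$ or $C_i$, then $X=\bigcup_i(X\cap C_i)$ is a union of components. This is consistent with Lemma~\ref{lem:component-set-stable}.

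The only delicate step is the locality identity for $\perp$. Everything else is bookkeeping over the partition, so I will state and justify that identity carefully first and then derive the remaining steps from it.
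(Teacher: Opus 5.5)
Your proposal is correct and follows essentially the same route as the paper: the same maps $\Phi(X)=(X\cap C_i)_{i\in I}$ and $\bigcup_i X_i$, with the componentwise computation of orthocomplements doing all the work. You isolate that computation explicitly as $Y^{\perp}\cap C_i=(Y\cap C_i)^{\perp_i}$, where the paper uses it inline, and you obtain preservation of joins from $\Phi$ being an order isomorphism rather than from preservation of meets and orthocomplements; both are harmless variations.
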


\begin{proof}
Let $X\in\St(W,R_Q)$. Since no $R_Q$-edge joins different components, $X\cap C_i$ is stable in the restricted frame $C_i$. This defines a map \(\Phi(X)=(X\cap C_i)_{i\in I}.\)
Assume now $X_i$ is stable in $C_i$ for every $i\in I$, and put $X=\bigcup_{i\in I}X_i$. Orthocomplementation is computed componentwise because all $R_Q$-neighbors of a point lie in its own component. Hence $X^{\perp\perp}\cap C_i=X_i^{\perp\perp}=X_i$, so $X$ is stable. This gives the inverse map. Intersections and orthocomplements are preserved componentwise, and therefore joins are preserved as well.

A component proposition has intersection $C_i$ or $\varnothing$ with each component. The converse is immediate.
\end{proof}

The stable sets represented by tuples of bottom and top coordinates are exactly the unions of components, and these form a complete Boolean algebra.

\begin{definition}[Component Boolean Algebra]
The \emph{component Boolean algebra} is
\[
\mathcal{B}_{\Comp}:=\left\{\bigcup_{i\in J}C_i\ \middle|\ J\subseteq I\right\}.
\]
\end{definition}

We next compare this Boolean algebra with the center of the stable set ortholattice.

For a bounded ortholattice $L$, the interval $[0,z]$ is equipped with the relative orthocomplement $x\mapsto x^{\perp}\wedge z$. An element $z\in L$ is \emph{central} if the map \(x\longmapsto (x\wedge z,x\wedge z^{\perp})\) is an ortholattice isomorphism from $L$ onto the product of the intervals $[0,z]$ and $[0,z^{\perp}]$. The \emph{center} of $L$, denoted by $Z(L)$, is the set of all central elements of $L$.

\begin{lemma}[Center of a Direct Product]
\label{lem:product-center}
For every family $(L_i)_{i\in I}$ of bounded ortholattices,
\[
Z\left(\prod_{i\in I}L_i\right)=\prod_{i\in I}Z(L_i).
\]
\end{lemma}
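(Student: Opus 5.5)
I will prove the two inclusions directly from the definition of centrality, writing $L=\prod_{i\in I}L_i$ and using that all ortholattice operations in $L$ (meets, joins, orthocomplements, bounds) are computed coordinatewise. The key observation is that for $z=(z_i)_{i\in I}\in L$ the intervals split as products. We have $[0,z]=\prod_{i}[0,z_i]$, and the relative orthocomplement $x\mapsto x^{\perp}\wedge z$ on $[0,z]$ is the product of the relative orthocomplements on the factors $[0,z_i]$. The same holds for $[0,z^{\perp}]=\prod_i[0,z_i^{\perp}]$. Consequently the map $\varphi_z\colon x\mapsto(x\wedge z,x\wedge z^{\perp})$ on $L$ is, after the canonical reshuffling $\prod_i([0,z_i]\times[0,z_i^{\perp}])\cong\prod_i[0,z_i]\times\prod_i[0,z_i^{\perp}]$, exactly the product map $\prod_i\varphi_{z_i}$, where $\varphi_{z_i}\colon L_i\to[0,z_i]\times[0,z_i^{\perp}]$.

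For the inclusion $\supseteq$, I will assume each $z_i$ is central. Then each $\varphi_{z_i}$ is an ortholattice isomorphism, and a product of ortholattice isomorphisms is an ortholattice isomorphism. The reshuffling is an ortholattice isomorphism because operations on both sides are coordinatewise. Hence $\varphi_z$ is an isomorphism, and $z$ is central.

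For $\subseteq$, I will assume $\varphi_z$ is an ortholattice isomorphism and prove that each $\varphi_{z_j}$ is one. Bijectivity transfers coordinatewise: a product map between products of nonempty sets is bijective if and only if each factor map is. Every factor is nonempty since it contains $0$. Surjectivity of $\varphi_{z_j}$ follows by taking a preimage of a tuple that is arbitrary in coordinate $j$ and $0$ elsewhere. Injectivity follows by embedding $L_j$ into $L$ with all other coordinates set to $0$. The inverse of $\varphi_{z_j}$ preserves the operations because it is the $j$-th coordinate of $\varphi_z^{-1}$ restricted to such tuples. Since $\varphi_{z_j}$ is a homomorphism by construction (it is the $j$-th component of a homomorphism), it is an isomorphism, so $z_j\in Z(L_j)$.

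I expect the main obstacle to be bookkeeping rather than substance. One must verify that the relative orthocomplement on $[0,z]$ in the product really is the coordinatewise relative orthocomplement. This holds because $x^{\perp}\wedge z$ is computed coordinatewise. One must also confirm that the reshuffling respects the interval orthocomplements. Beyond that, the only point needing care is that the factor-extraction argument in the $\subseteq$ direction uses nonemptiness of each $L_i$ (guaranteed by boundedness), so that the $0$-padding embedding is available.
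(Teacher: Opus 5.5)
Your proposal is correct and takes essentially the same route as the paper. The $\supseteq$ inclusion is the product of the coordinate isomorphisms. For the $\subseteq$ inclusion you pad with bottom elements to get coordinatewise injectivity and surjectivity, then read off preservation of the operations from the coordinates of the global isomorphism; your identification of $x\mapsto(x\wedge z,x\wedge z^{\perp})$ with the product of the coordinate maps packages the paper's explicit coordinate comparisons for relative orthocomplements and meets into one observation.
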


\begin{proof}
Write $0_i$ and $1_i$ for the bottom and top elements of $L_i$. Put $L=\prod_{i\in I}L_i$ and let $z=(z_i)_{i\in I}$ be central in $L$. For each $i\in I$ and $a\in L_i$, let $\widehat a^{\,i}$ denote the tuple whose $i$th coordinate is $a$ and whose remaining coordinates are bottom elements. The global decomposition determined by $z$ sends $\widehat a^{\,i}$ to a pair whose $i$th coordinates are $a\wedge z_i$ and $a\wedge z_i^{\perp}$ and whose remaining coordinates are bottom elements. Its injectivity therefore implies that
\[
a\longmapsto(a\wedge z_i,a\wedge z_i^{\perp})
\]
is injective for every $i\in I$.

Fix $k\in I$. For $b\leq z_k$ and $c\leq z_k^{\perp}$, form elements of $[0,z]$ and $[0,z^{\perp}]$ supported only at coordinate $k$, with respective values $b$ and $c$. Surjectivity of the global decomposition gives $x\in L$ whose image is this pair. The $k$th coordinate of $x$ maps to $(b,c)$. At every coordinate $i\neq k$, the coordinate map sends both $x_i$ and $0_i$ to $(0_i,0_i)$. Its injectivity gives $x_i=0_i$. Thus the coordinate map at $k$ is surjective. Since $k$ was arbitrary, every coordinate map is bijective.

It remains to verify preservation of the relative orthocomplements. Fix $k\in I$ and $a\in L_k$. The orthocomplement of $\widehat a^{\,k}$ in $L$ has $k$th coordinate $a^{\perp}$ and every other coordinate equal to the corresponding top element. Since the global decomposition is an ortholattice isomorphism, it preserves orthocomplements in the product of the intervals. Comparing the $k$th coordinate in the first interval gives
\[
a^{\perp}\wedge z_k=(a\wedge z_k)^{\perp}\wedge z_k,
\]
and comparison in the second interval gives
\[
a^{\perp}\wedge z_k^{\perp}=(a\wedge z_k^{\perp})^{\perp}\wedge z_k^{\perp}.
\]
The coordinate decomposition preserves meets because meet is associative, commutative, and idempotent. Hence it is an ortholattice isomorphism, and $z_k$ is central in $L_k$. Since $k$ was arbitrary, $z_i$ is central in $L_i$ for every $i\in I$.

Conversely, suppose $z_i$ is central in $L_i$ for every $i\in I$. The product of the coordinate ortholattice isomorphisms
\[
a_i\longmapsto(a_i\wedge z_i,a_i\wedge z_i^{\perp})
\]
is an ortholattice isomorphism from $L$ onto $[0,z]\times[0,z^{\perp}]$. Hence $z$ is central in $L$.
\end{proof}

The product lemma gives the precise relation between component propositions and the full center.

\begin{proposition}[Centrality of Component Propositions]
\label{prop:component-center}
The algebra $\mathcal{B}_{\Comp}$ is a complete Boolean subalgebra of $\St(W,R_Q)$, and every member of $\mathcal{B}_{\Comp}$ is central. If every factor $\St(C_i,R_Q|_{C_i})$ has center $\{\varnothing,C_i\}$, then
\[
Z(\St(W,R_Q))=\mathcal{B}_{\Comp}.
\]
\end{proposition}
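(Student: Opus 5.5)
The plan is to transport everything through the isomorphism $\Phi$ of Proposition~\ref{prop:product-decomposition} and then apply Lemma~\ref{lem:product-center}.

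\emph{Boolean subalgebra.} By Lemma~\ref{lem:component-set-stable}, every member of $\mathcal{B}_{\Comp}$ is stable and its orthocomplement is its set-theoretic complement, which is again a union of components. Arbitrary intersections of unions of components are unions of components, since each component either lies in all the given sets or misses at least one of them. For joins, the set-theoretic union $X=\bigcup_\lambda X_\lambda$ of component propositions is itself a component proposition, hence stable by Lemma~\ref{lem:component-set-stable}. So $X^{\perp\perp}=X$, and the lattice join coincides with the union. Thus $\mathcal{B}_{\Comp}$ is closed under arbitrary meets, arbitrary joins, and orthocomplement, and contains $\varnothing$ and $W$. On it, meet and join are intersection and union and the complement is the set complement, so it is a complete Boolean algebra. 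Equivalently, under $\Phi$ it corresponds to $\prod_i\{\varnothing,C_i\}$, which is the power set algebra of $I$.

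\emph{Centrality.} I will use the easy fact that $0$ and $1$ are central in every bounded ortholattice. For $z=0$, the map $x\mapsto(0,x)$ is an isomorphism onto $[0,0]\times[0,1]\cong L$; the case $z=1$ is symmetric. Hence $\{\varnothing,C_i\}\subseteq Z(\St(C_i,R_Q|_{C_i}))$ for each $i$. By Proposition~\ref{prop:product-decomposition}, a component proposition corresponds under $\Phi$ to a tuple with every coordinate in $\{\varnothing,C_i\}$. By the converse direction of Lemma~\ref{lem:product-center}, such a tuple is central in the product. Since centrality is defined purely in ortholattice terms, it is preserved by the ortholattice isomorphism $\Phi$ and its inverse. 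So every member of $\mathcal{B}_{\Comp}$ is central in $\St(W,R_Q)$.

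\emph{Equality under the hypothesis.} If each factor has center $\{\varnothing,C_i\}$, Lemma~\ref{lem:product-center} gives
\[
Z\Bigl(\prod_{i\in I}\St(C_i,R_Q|_{C_i})\Bigr)=\prod_{i\in I}\{\varnothing,C_i\}.
\]
Pulling this back along $\Phi$, again using that isomorphisms preserve centers, identifies $Z(\St(W,R_Q))$ with the tuples of bottom and top coordinates. By the last clause of Proposition~\ref{prop:product-decomposition}, these tuples are exactly $\mathcal{B}_{\Comp}$.

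The only mildly delicate point is the bookkeeping about centers transferring along isomorphisms and about the relative orthocomplement in the degenerate intervals $[0,0]$ and $[0,1]$. I expect this to be routine, so no step should present a real obstacle. The main care needed is to check that joins in $\mathcal{B}_{\Comp}$ agree with joins in $\St(W,R_Q)$, which is exactly what "complete subalgebra" requires.
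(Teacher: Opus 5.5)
Your proposal is correct and follows the paper's proof step for step: Lemma~\ref{lem:component-set-stable} gives the complete Boolean subalgebra, and Proposition~\ref{prop:product-decomposition} together with both directions of Lemma~\ref{lem:product-center} gives centrality and, under the hypothesis, equality with the center. Your extra checks are points the paper leaves implicit, and all of them are correct: that $\varnothing$ and $C_i$ are central in each factor, that joins of component propositions in $\St(W,R_Q)$ are their unions, and that centers transfer along ortholattice isomorphisms.
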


\begin{proof}
Lemma~\ref{lem:component-set-stable} shows that orthocomplementation in $\mathcal{B}_{\Comp}$ is ordinary set complement, and unions and intersections are computed set-theoretically. Hence $\mathcal{B}_{\Comp}$ is a complete Boolean subalgebra.

Under the product isomorphism of Proposition~\ref{prop:product-decomposition}, a member $X$ of $\mathcal{B}_{\Comp}$ is represented by a tuple whose $i$th coordinate is $C_i$ when $C_i\subseteq X$ and $\varnothing$ otherwise. Every such coordinate is central in its factor, so Lemma~\ref{lem:product-center} shows that $X$ is central.

If every factor has center $\{\varnothing,C_i\}$, Lemma~\ref{lem:product-center} shows that every central element of the product has coordinate $\varnothing$ or $C_i$ in each factor. Under Proposition~\ref{prop:product-decomposition}, these tuples correspond exactly to the component propositions of $W$. Hence
\[
Z(\St(W,R_Q))=\mathcal{B}_{\Comp}.
\]
\end{proof}

\subsection{Modal Truth Sets on Components}

\begin{theorem}[Boolean Centrality of Boxed Formulas]
\label{thm:boxed-centrality}
In every quantum modal structure, the truth set of $\Box\alpha$ belongs to $\mathcal{B}_{\Comp}$.
\end{theorem}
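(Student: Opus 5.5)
The plan is to show directly that $\llbracket\Box\alpha\rrbracket$ is a union of $R_Q$-components. Equivalently, whenever $u\sim_Q v$, we have $u\models\Box\alpha$ if and only if $v\models\Box\alpha$. Once this closure property is in hand, the truth set is the union of the components $C_i$ it meets, so it lies in $\mathcal{B}_{\Comp}$ by definition.

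The key step is Theorem~\ref{thm:component-invariance}. Since $\mathcal{M}$ is a quantum modal structure, $R_M$ satisfies the forcing condition, so there is a family $(D_C)$ with $R_M(u,-)=D_{[u]_Q}$. The modal truth clause then gives
\[
u\models\Box\alpha\quad\Longleftrightarrow\quad D_{[u]_Q}\subseteq\llbracket\alpha\rrbracket .
\]
The right-hand side depends only on the component $[u]_Q$, so $\llbracket\Box\alpha\rrbracket=\bigcup\{C_i\mid D_{C_i}\subseteq\llbracket\alpha\rrbracket\}$, which is a component proposition. No induction on $\alpha$ and no stability of $\llbracket\alpha\rrbracket$ are needed, since the argument uses only the shape of $R_M$.

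There is no real obstacle. The one point to state explicitly is that membership in $\mathcal{B}_{\Comp}$ only requires being a union of components; stability then follows from Lemma~\ref{lem:component-set-stable} and is consistent with Proposition~\ref{prop:formula-stability}. If one prefers to avoid the global classification, the same conclusion follows from Lemma~\ref{lem:edge-invariance} by propagating equality of successor sets along a finite $R_Q$-path, which is exactly the content of Theorem~\ref{thm:component-invariance}.
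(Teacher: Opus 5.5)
Your proposal is correct and follows essentially the same route as the paper. The paper also invokes Theorem~\ref{thm:component-invariance} to get equal successor sets for $\sim_Q$-related worlds, and reads off from the box clause that $\Box\alpha$ has a truth set constant on components. Your explicit description $\bigcup\{C_i\mid D_{C_i}\subseteq\llbracket\alpha\rrbracket\}$ simply makes that union concrete.
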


\begin{proof}
Let $u\sim_Qv$. Theorem~\ref{thm:component-invariance} gives $R_M(u,-)=R_M(v,-)$. The truth clause for $\Box$ therefore yields \(u\models\Box\alpha\quad\Longleftrightarrow\quad v\models\Box\alpha.\)
Thus the truth set is constant on components and is a union of components.
\end{proof}

The componentwise constancy of boxed truth sets immediately yields the local form of modal excluded middle.

\begin{corollary}[Local Modal Excluded Middle]
\label{cor:local-mem}
Every world satisfies at least one of $\Box\alpha$ and $\neg\Box\alpha$.
\end{corollary}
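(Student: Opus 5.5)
The plan is to derive the corollary directly from Theorem~\ref{thm:boxed-centrality} together with Lemma~\ref{lem:component-set-stable}, without returning to the forcing condition itself.

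Fix a quantum modal structure $\mathcal{M}=(W,R_Q,R_M,\rho)$, a formula $\alpha$ and a world $w\in W$, and put $X=\llbracket\Box\alpha\rrbracket$. By Theorem~\ref{thm:boxed-centrality}, $X$ belongs to $\mathcal{B}_{\Comp}$, so it is a union of $R_Q$-components. Lemma~\ref{lem:component-set-stable} then gives $X^{\perp}=W\setminus X$. The truth clause for quantum negation identifies $\llbracket\neg\Box\alpha\rrbracket$ with $X^{\perp}$. Hence
\[
\llbracket\Box\alpha\rrbracket\cup\llbracket\neg\Box\alpha\rrbracket=X\cup(W\setminus X)=W,
\]
so $w$ satisfies $\Box\alpha$ or $\neg\Box\alpha$. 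In fact it satisfies exactly one of them, since $X\cap X^{\perp}=\varnothing$ by reflexivity of $R_Q$.

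There is no real obstacle. The only point to check is that the negation clause really computes $X^{\perp}$ relative to $R_Q$, so that the set-complement description from Lemma~\ref{lem:component-set-stable} applies. This holds because $w\models\neg\Box\alpha$ exactly when no $R_Q$-neighbour of $w$ lies in $X$.

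As a cross-check, the direct argument given for MEM soundness at the end of Section~\ref{subsec:base-calculus} yields the same result. If $w\not\models\Box\alpha$, we pick a witness successor $u$ with $u\not\models\alpha$. The forcing condition makes $u$ a successor of every $R_Q$-neighbour of $w$, so each such neighbour fails $\Box\alpha$, and therefore $w\models\neg\Box\alpha$. I would present the component-algebra argument as the main proof, since the corollary is stated as a consequence of Theorem~\ref{thm:boxed-centrality}.
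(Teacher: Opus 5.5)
Your proposal is correct and follows the paper's proof essentially verbatim: Theorem~\ref{thm:boxed-centrality} makes $\llbracket\Box\alpha\rrbracket$ a union of components, Lemma~\ref{lem:component-set-stable} identifies its orthocomplement with its set complement, and the negation clause identifies that orthocomplement with $\llbracket\neg\Box\alpha\rrbracket$. Your further remark that exactly one of the two formulas holds, by reflexivity of $R_Q$, is also correct.
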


\begin{proof}
Put $X=\llbracket\Box\alpha\rrbracket$. By Theorem~\ref{thm:boxed-centrality}, $X$ is a union of components. Lemma~\ref{lem:component-set-stable} gives $X^{\perp}=W\setminus X$. Since $\llbracket\neg\Box\alpha\rrbracket=X^{\perp}$, every world belongs to one of the two truth sets.
\end{proof}

Corollary~\ref{cor:local-mem} gives the semantic content of MEM: the truth set of every boxed formula is a component proposition, whose orthocomplement is its set complement. The conclusion selected by local validity may vary from one world to another.

\section{Component Modality}
\label{sec:component-modality}

The preceding structural results are next applied to the component relation. Its box and diamond are characterized as lower and upper component approximations, and their role in hard superselection models is then established.

\subsection{Semantic Component Operators}

\begin{definition}[Component Relation]
The \emph{component relation} $R_C$ on an orthoframe is defined by
\[
R_C(u,v)\quad\Longleftrightarrow\quad u\sim_Qv.
\]
A quantum modal structure with $R_M=R_C$ is called a \emph{component model}.
\end{definition}

The relation $R_C$ is an equivalence relation and satisfies the forcing condition relative to $R_Q$. For a stable set $X$, define \(\Box_CX:=\{u\in W\mid R_C(u,-)\subseteq X\}.\)
The semantic diamond is \(\Diamond_CX:=(\Box_C(X^{\perp}))^{\perp}.\)

\begin{theorem}[Component Approximation Operators]
\label{thm:component-interior-cover}
Let $(C_i)_{i\in I}$ be the $R_Q$-components and let $X\in\St(W,R_Q)$. Then
\[
\Box_CX=\bigcup\{C_i\mid C_i\subseteq X\}
\]
and
\[
\Diamond_CX=\bigcup\{C_i\mid C_i\cap X\neq\varnothing\}.
\]
Consequently, $\Box_CX$ is the greatest member of $\mathcal{B}_{\Comp}$ below $X$, and $\Diamond_CX$ is the least member of $\mathcal{B}_{\Comp}$ above $X$.
\end{theorem}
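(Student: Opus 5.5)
The plan is to compute both operators directly from the definition of $R_C$, using the fact that $R_C(u,-)=[u]_Q$ for every world $u$. For the box, a world $u$ lies in $\Box_CX$ if and only if $[u]_Q\subseteq X$. Since the components partition $W$, this says exactly that $u$ belongs to some $C_i$ with $C_i\subseteq X$. This gives the first displayed formula immediately. Note that this step uses nothing about stability of $X$. The result is manifestly a union of components, hence a member of $\mathcal{B}_{\Comp}$.

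For the diamond, I will first apply the box formula to $X^{\perp}$, which is stable. This yields $\Box_C(X^{\perp})=\bigcup\{C_i\mid C_i\subseteq X^{\perp}\}$. This set is a component proposition, so Lemma~\ref{lem:component-set-stable} gives its orthocomplement as its set complement. Hence $\Diamond_CX=\bigcup\{C_i\mid C_i\not\subseteq X^{\perp}\}$.

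The key step, and the only one needing care, is to show that $C_i\not\subseteq X^{\perp}$ if and only if $C_i\cap X\neq\varnothing$.
\begin{itemize}
\item If $x\in C_i\cap X$, reflexivity of $R_Q$ gives $x\notin X^{\perp}$, so $C_i\not\subseteq X^{\perp}$.
\item Conversely, suppose $w\in C_i\setminus X^{\perp}$. Then $R_Q(w,x)$ for some $x\in X$. Since $R_Q$-edges stay within components, $x\in C_i$, so $C_i\cap X\neq\varnothing$.
\end{itemize}
Again, stability of $X$ is not really needed here.

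For the extremal characterizations, let $Y=\bigcup_{j\in J}C_j\in\mathcal{B}_{\Comp}$.
\begin{itemize}
\item If $Y\subseteq X$, each $C_j\subseteq X$, so $Y\subseteq\Box_CX$. Together with $\Box_CX\subseteq X$, which is clear from the formula, this shows $\Box_CX$ is the greatest member of $\mathcal{B}_{\Comp}$ below $X$.
\item If $X\subseteq Y$, every component meeting $X$ meets $Y$. Distinct components are disjoint and $Y$ is a union of components, so any component meeting $Y$ is one of the $C_j$. Thus $\Diamond_CX\subseteq Y$. Also $X\subseteq\Diamond_CX$, since every point of $X$ lies in a component meeting $X$. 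Hence $\Diamond_CX$ is the least member of $\mathcal{B}_{\Comp}$ above $X$.
\end{itemize}
The ordering in $\St(W,R_Q)$ is inclusion, so these set-theoretic statements are exactly the lattice-theoretic claims. I expect no real obstacle. The main point to handle correctly is the translation of the orthocomplement of a component proposition into set complement via Lemma~\ref{lem:component-set-stable}.
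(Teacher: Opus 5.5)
Your proposal is correct and takes essentially the same route as the paper. You compute $\Box_C$ from $R_C(u,-)=[u]_Q$, apply this to $X^{\perp}$, and turn the orthocomplement of the resulting component proposition into a set complement via Lemma~\ref{lem:component-set-stable}; the only difference is that you characterize $C_i\not\subseteq X^{\perp}$ where the paper characterizes $C_i\subseteq X^{\perp}$, and your extremal arguments match the paper's.
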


\begin{proof}
A world $u$ belongs to $\Box_CX$ exactly when every world in $[u]_Q$ belongs to $X$. This is equivalent to $[u]_Q\subseteq X$, which proves the first equality.

For the second equality, observe that a component $C_i$ is contained in $X^{\perp}$ exactly when $C_i\cap X=\varnothing$. One direction follows from reflexivity. For the other direction, if $C_i$ is disjoint from $X$, then every point of $X$ lies in a component different from $C_i$, so no point of $C_i$ is $R_Q$-related to a point of $X$. Thus \(\Box_C(X^{\perp})=\bigcup\{C_i\mid C_i\cap X=\varnothing\}.\)
This set is a union of components, so its orthocomplement is its set complement by Lemma~\ref{lem:component-set-stable}. The complement is the union of the components meeting $X$.

The first displayed union is a union of components, lies below $X$, and contains every component proposition contained in $X$. The second is a union of components, contains $X$, and is contained in every component proposition containing $X$.
\end{proof}

The approximation characterization gives the standard interior and closure laws of the component operators.

\begin{corollary}[Modal Closure Laws]
\label{cor:component-modal-laws}
The component operators satisfy the following equations on stable sets:
\begin{align*}
\Box_CX&\subseteq X, & \Box_C\Box_CX&=\Box_CX,\\
X&\subseteq\Diamond_CX, & \Diamond_C\Diamond_CX&=\Diamond_CX,\\
\Box_C(X\cap Y)&=\Box_CX\cap\Box_CY, &
\Diamond_C(X\vee Y)&=\Diamond_CX\vee\Diamond_CY.
\end{align*}
Here $\vee$ denotes the join in $\St(W,R_Q)$. Their common range is $\mathcal{B}_{\Comp}$.
\end{corollary}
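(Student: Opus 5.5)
The plan is to derive every law from the explicit descriptions in Theorem~\ref{thm:component-interior-cover}. That theorem expresses $\Box_CX$ as the union of components contained in $X$, and $\Diamond_CX$ as the union of components meeting $X$. I first record the range claim. Both operators produce unions of components, so their values lie in $\mathcal{B}_{\Comp}$. Conversely, if $B\in\mathcal{B}_{\Comp}$, then $B$ is stable by Lemma~\ref{lem:component-set-stable}. The components contained in $B$ and the components meeting $B$ are exactly the components making up $B$. Hence $\Box_CB=B=\Diamond_CB$. So every member of $\mathcal{B}_{\Comp}$ is a fixed point of both operators, and the common range is exactly $\mathcal{B}_{\Comp}$.

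The inclusions $\Box_CX\subseteq X\subseteq\Diamond_CX$ are part of the approximation statement of Theorem~\ref{thm:component-interior-cover}. Idempotence follows from the fixed-point observation: $\Box_CX$ and $\Diamond_CX$ already belong to $\mathcal{B}_{\Comp}$, so applying the operator again returns them unchanged.

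For meets, a component $C_i$ is contained in $X\cap Y$ exactly when it is contained in both $X$ and $Y$. The union of components contained in both sets equals the intersection of the two unions, because components are pairwise disjoint. This gives $\Box_C(X\cap Y)=\Box_CX\cap\Box_CY$. Alternatively, one can argue directly from the relational definition, since $R_C(u,-)\subseteq X\cap Y$ iff it is contained in each set.

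The join law is the only step needing care, because $X\vee Y=(X\cup Y)^{\perp\perp}$ may be strictly larger than $X\cup Y$. I would first show that a component meets $X\vee Y$ iff it meets $X\cup Y$. One direction is immediate from $X\cup Y\subseteq X\vee Y$. For the other, suppose $C_i$ is disjoint from $X\cup Y$. Then $C_i\subseteq(X\cup Y)^{\perp}$, by the argument in the proof of Theorem~\ref{thm:component-interior-cover}: no $R_Q$-edge leaves $C_i$. Reflexivity then shows that no point of $C_i$ lies in $(X\cup Y)^{\perp\perp}$, so $C_i$ is disjoint from $X\vee Y$. Consequently $\Diamond_C(X\vee Y)$ is the union of components meeting $X$ or $Y$, which is $\Diamond_CX\cup\Diamond_CY$. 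Finally, this set-theoretic union of two members of $\mathcal{B}_{\Comp}$ is already stable by Lemma~\ref{lem:component-set-stable}. So it coincides with the ortholattice join $\Diamond_CX\vee\Diamond_CY$, completing the corollary.
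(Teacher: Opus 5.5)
Your proof is correct. For the inclusions, idempotence, the meet law and the range it matches the paper: everything is read off from Theorem~\ref{thm:component-interior-cover} and from the fact that unions of components are fixed points of both operators.

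The one real difference is the join law. The paper gets it from the meet law by orthocomplement duality. It uses $\Diamond_CX=(\Box_C(X^{\perp}))^{\perp}$ together with the ortholattice De Morgan identities $(X\vee Y)^{\perp}=X^{\perp}\cap Y^{\perp}$ and $(A\cap B)^{\perp}=A^{\perp}\vee B^{\perp}$ in $\St(W,R_Q)$. You instead show directly that a component meets $X\vee Y$ exactly when it meets $X\cup Y$. The duality route is shorter and needs no extra component-level lemma. Yours also shows that $\Diamond_CX\vee\Diamond_CY$ is just the set-theoretic union $\Diamond_CX\cup\Diamond_CY$.
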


\begin{proof}
The interior and closure equations follow from the greatest lower approximation and least upper approximation properties in Theorem~\ref{thm:component-interior-cover}. The meet equation follows because a component is contained in $X\cap Y$ exactly when it is contained in both $X$ and $Y$. The join equation follows by orthocomplement duality. A stable set is fixed by either operator exactly when it is a union of components.
\end{proof}

\subsection{Hard Superselection Models}

Let $(\mathcal{H}_i)_{i\in I}$ be a family of nonzero complex Hilbert spaces. For $i\in I$ and every nonzero $x\in\mathcal{H}_i$, let $[x]$ denote the one-dimensional subspace spanned by $x$. Let
$W=\bigsqcup_{i\in I}\mathbb{P}(\mathcal{H}_i)$
be the disjoint union of their projective spaces, where $\mathbb{P}(\mathcal{H}_i)$ is the set of one-dimensional subspaces of $\mathcal{H}_i$. The relation $R_Q$ is defined by
\[
R_Q([x],[y])\quad\Longleftrightarrow\quad [x],[y]\in\mathbb{P}(\mathcal{H}_i)\text{ for some }i\in I\text{ and }\langle x,y\rangle\neq0.
\]
The state space contains the rays internal to each sector and omits rays that superpose different sectors. This choice gives the hard superselection model used below.

\begin{proposition}[Sector Components]
\label{prop:hilbert-components}
The $R_Q$-components of the preceding frame are exactly the sets $\mathbb{P}(\mathcal{H}_i)$.
\end{proposition}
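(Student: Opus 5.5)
The claim has two halves. First, no $R_Q$-path leaves a sector, so every component lies inside some $\mathbb{P}(\mathcal{H}_i)$. Second, each $\mathbb{P}(\mathcal{H}_i)$ is itself $R_Q$-connected. Together these show that the components are exactly the sets $\mathbb{P}(\mathcal{H}_i)$. Each such set is nonempty because $\mathcal{H}_i\neq0$, so every sector really contributes a component.

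For the first half, the definition of $R_Q$ relates two rays only when both lie in the same $\mathbb{P}(\mathcal{H}_i)$. By induction on the length of a finite $R_Q$-path, every world on a path starting in $\mathbb{P}(\mathcal{H}_i)$ stays in $\mathbb{P}(\mathcal{H}_i)$. Hence $[u]_Q\subseteq\mathbb{P}(\mathcal{H}_i)$ whenever $u\in\mathbb{P}(\mathcal{H}_i)$. Before this, I would note that $R_Q$ is well defined on rays: whether $\langle x,y\rangle\neq0$ does not depend on the chosen representatives, since rescaling multiplies the inner product by a nonzero scalar. Reflexivity follows from $\langle x,x\rangle\neq0$, and symmetry from conjugate symmetry of the inner product.

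For the second half, fix $[x],[y]\in\mathbb{P}(\mathcal{H}_i)$. I would connect them by a path of length at most two.
\begin{itemize}
\item If $\langle x,y\rangle\neq0$, the two rays are directly related.
\item Otherwise $x\perp y$. Put $z=x+y$, which is nonzero because $x$ and $y$ are linearly independent. Then $\langle x,z\rangle=\|x\|^2\neq0$ and $\langle z,y\rangle=\|y\|^2\neq0$, so $[x]\,R_Q\,[z]\,R_Q\,[y]$.
\end{itemize}
Hence $\mathbb{P}(\mathcal{H}_i)\subseteq[x]_Q$.

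The main point needing care is this orthogonal case, in particular checking that $z\neq0$ and that the intermediate ray lies in the same sector. Both are clear, since $z\in\mathcal{H}_i$. The one-dimensional case is trivial because $\mathbb{P}(\mathcal{H}_i)$ is then a single point. Combining the two halves gives $[u]_Q=\mathbb{P}(\mathcal{H}_i)$ for every $u\in\mathbb{P}(\mathcal{H}_i)$, which is the statement.
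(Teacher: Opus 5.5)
Your proposal is correct and follows the paper's proof: sectors are disconnected from one another by the definition of $R_Q$, and within a sector any two orthogonal rays $[x],[y]$ are joined through $[x+y]$, so any two rays are connected by a path of length at most two. The extra checks you add (well-definedness of $R_Q$ on rays, $x+y\neq0$, nonemptiness of each sector) are routine details that the paper leaves implicit.
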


\begin{proof}
Different sectors are $R_Q$-disconnected by definition. Fix one sector. If two rays are nonorthogonal, they are adjacent. If rays $[x]$ and $[y]$ are orthogonal, the ray $[x+y]$ is well defined and is nonorthogonal to both. Hence every pair of rays in the sector is joined by a path of length at most two.
\end{proof}

Having established the compatibility components, we can now apply the general product decomposition to the sector model.

\begin{proposition}[Sector Product Representation]
\label{prop:hilbert-stable-product}
The stable set ortholattice of the hard superselection frame is isomorphic to
\[
\prod_{i\in I}\mathcal{L}(\mathcal{H}_i),
\]
where $\mathcal{L}(\mathcal{H}_i)$ is the ortholattice of closed subspaces of $\mathcal{H}_i$. Under this isomorphism, $\mathcal{B}_{\Comp}$ is the center of the product.
\end{proposition}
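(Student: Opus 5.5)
The plan is to combine the general product decomposition with the classical identification of stable sets in a projective Hilbert frame. By Proposition~\ref{prop:hilbert-components}, the $R_Q$-components are exactly the sectors $\mathbb{P}(\mathcal{H}_i)$. Proposition~\ref{prop:product-decomposition} therefore gives an isomorphism of $\St(W,R_Q)$ with $\prod_{i\in I}\St(\mathbb{P}(\mathcal{H}_i),R_Q|_{\mathbb{P}(\mathcal{H}_i)})$. It then suffices to identify each factor with $\mathcal{L}(\mathcal{H}_i)$ and to compute the centers of the factors.

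For a single nonzero Hilbert space $\mathcal{H}$, we map a closed subspace $M$ to the set $\mathbb{P}(M)$ of rays contained in $M$. For any set $X$ of rays, $X^{\perp}$ consists of the rays orthogonal to every ray in $X$, and so $X^{\perp}=\mathbb{P}(S^{\perp})$, where $S$ is the linear span of the representatives of $X$. Consequently $X^{\perp\perp}=\mathbb{P}(\overline{\operatorname{span}}\,X)$. This uses $S^{\perp\perp}=\overline{S}$ and the fact that a closed subspace is the closure of the span of its rays.
\begin{itemize}
\item Stable sets are therefore exactly the sets $\mathbb{P}(M)$ with $M$ closed.
\item The map $M\mapsto\mathbb{P}(M)$ is injective, since a closed subspace is determined by its rays.
\item The map is order-preserving in both directions.
\item It sends $M^{\perp}$ to $\mathbb{P}(M)^{\perp}$.
\end{itemize}
An order isomorphism commuting with orthocomplements is an ortholattice isomorphism. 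Composing with the product decomposition yields the first claim.

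For the center, Proposition~\ref{prop:component-center} reduces the second claim to showing that each $\mathcal{L}(\mathcal{H}_i)$ has trivial center $\{0,\mathcal{H}_i\}$. This step I expect to be the main obstacle. Since the paper defines centrality through the decomposition map $x\mapsto(x\wedge z,x\wedge z^{\perp})$, I would argue directly rather than cite irreducibility. Let $z$ be a closed subspace with $z\neq0,\mathcal{H}$. Pick unit vectors $a\in z$ and $b\in z^{\perp}$, and take $x=[a+b]$. Then $x\wedge z=0$ and $x\wedge z^{\perp}=0$, because the only vectors in the line are multiples of $a+b$, which lie in neither subspace. So $x$ and $0$ have the same image, and injectivity fails. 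Hence $z$ is not central.

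Lemma~\ref{lem:product-center} then shows that the center of the product consists of the tuples with coordinates $0$ or $\mathcal{H}_i$. Under the isomorphism these tuples correspond to the unions of sectors, which are exactly the members of $\mathcal{B}_{\Comp}$, as Proposition~\ref{prop:component-center} already records.
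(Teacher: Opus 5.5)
Your proof is correct. Its skeleton matches the paper's: sector components, the product decomposition, identification of each factor with $\mathcal{L}(\mathcal{H}_i)$, and Lemma~\ref{lem:product-center}.

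The routes differ at the key step, showing that each $\mathcal{L}(\mathcal{H}_i)$ has center $\{0,\mathcal{H}_i\}$.

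\begin{itemize}
\item \emph{The paper's route.} It argues through orthomodular lattice theory and operator algebra. A central element is compatible with every element (citing Kalmbach). Compatibility of closed subspaces amounts to commutation of their projections. A projection commuting with every projection lies in the center of $\mathcal{B}(\mathcal{H}_i)$, so it is $0$ or the identity.
\item \emph{Your route.} You refute centrality directly from the paper's own definition. For nontrivial $z$, the ray $[a+b]$ with $a\in z$ and $b\in z^{\perp}$ meets both $z$ and $z^{\perp}$ trivially. The decomposition map therefore sends it to the same image as $0$ and is not injective.
\end{itemize}

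What your argument buys:
\begin{itemize}
\item It is self-contained and works uniformly in every dimension.
\item It needs neither orthomodularity nor the operator-theoretic step the paper leaves implicit, namely passing from commuting with all projections to commuting with all of $\mathcal{B}(\mathcal{H}_i)$.
\item It makes visible exactly what the paper's closing remark says: superposition rays are what destroy nontrivial central elements, and the hard superselection frame omits them only across sectors.
\end{itemize}

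What the paper's route buys is a link between the result and the standard characterization of the center as the elements compatible with everything. That is conceptually informative but depends on cited results.

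You also supply the computation $X^{\perp\perp}=\mathbb{P}(\overline{\operatorname{span}}\,X)$ and check that $M\mapsto\mathbb{P}(M)$ is an ortholattice isomorphism. The paper only asserts both.
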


\begin{proof}
For one projective Hilbert frame, stable subsets are exactly the sets of rays contained in closed subspaces. Proposition~\ref{prop:product-decomposition} therefore gives the displayed product. The center of each factor $\mathcal{L}(\mathcal{H}_i)$ consists of its bottom and top elements. In an orthomodular lattice, a central element is compatible with every element \cite{Kalmbach1983}. For the Hilbert lattice, compatibility of closed subspaces is equivalent to commutativity of their orthogonal projections. Hence the projection onto a central subspace commutes with every orthogonal projection on $\mathcal{H}_i$. Such a projection belongs to the center of $\mathcal{B}(\mathcal{H}_i)$, the algebra of bounded linear operators on $\mathcal{H}_i$, and is therefore a scalar multiple of the identity. Since it is a projection, it is either zero or the identity. The center of a direct product is the product of the centers of the factors by Lemma~\ref{lem:product-center}. Hence each central element is determined by an independent choice, for every $i\in I$, between the zero subspace and the whole sector. These are exactly the component propositions.
\end{proof}

The following example illustrates the lower and upper component approximations in the simplest nontrivial sector decomposition.

\begin{example}[Two Superselection Sectors]
Let $W=W_0\sqcup W_1$ be the projective spaces of two sectors. Let $X$ be the set of rays in a proper nonzero closed subspace of the first sector. Then \(\Box_CX=\varnothing\) and \(\Diamond_CX=W_0\). The set $X$ contains no complete sector, and its central cover is the first sector. If $Y=W_0$, then $\Box_CY=Y=\Diamond_CY$.
\end{example}

\section{Coincidence of Three Frame Logics}
\label{sec:logic-coincidence}

Attention now turns to the relation among three classes of structures. Restriction to a modal equivalence class and connectedization by an added point yield their coincidence for sequents with one conclusion.

\subsection{Frame Classes}

\begin{definition}[Semantic Frame Classes]
We consider the following classes of quantum modal structures.

\begin{itemize}
\item $\mathsf{C}$ is the class of component models, in which $R_M=R_C$.
\item $\mathsf{E}$ is the class of structures in which $R_M$ is an equivalence relation.
\item $\mathsf{U}$ is the class of structures in which $R_Q$ is connected and $R_M=W\times W$.
\end{itemize}
\end{definition}

Every member of $\mathsf{C}$ belongs to $\mathsf{E}$, and every member of $\mathsf{U}$ belongs to $\mathsf{C}$. We shall prove the converse inclusions at the level of validity for sequents with one conclusion.

The three classes already coincide on connected compatibility frames. Corollary~\ref{cor:connected-constant} gives $R_M=W\times D$ for every relation satisfying the forcing condition on such a frame. If $R_M$ is reflexive, then every world belongs to $D$, so $D=W$ and $R_M=W\times W$. The proof below removes the remaining freedom in a nonconnected countermodel by restricting it to one modal equivalence class and then applying the connectedization construction to its compatibility frame.

\subsection{Restriction to a Modal Equivalence Class}

\begin{lemma}[Containment of Compatibility]
\label{lem:rq-contained-rm}
If $R_M$ satisfies the forcing condition and is reflexive and symmetric, then
\[
R_Q\subseteq R_M.
\]
\end{lemma}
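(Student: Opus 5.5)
Suppose $R_Q(u,v)$; we want to show $R_M(u,v)$. By reflexivity of $R_M$ we have $R_M(v,v)$. The forcing condition, applied at $v$ to the successor $v$, gives $R_M(x,v)$ for every $x$ with $R_Q(v,x)$. By symmetry of $R_Q$ we have $R_Q(v,u)$, so $R_M(u,v)$ follows. This is a direct two-step argument.

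In the paper's terms, the same argument can be phrased through Lemma~\ref{lem:edge-invariance}. From $R_Q(u,v)$ that lemma gives $R_M(u,-)=R_M(v,-)$. Reflexivity puts $v$ into $R_M(v,-)$, hence into $R_M(u,-)$. This route uses only reflexivity of $R_M$. Symmetry of $R_M$ is not needed for the inclusion itself; presumably it is included in the hypothesis because the lemma is later applied to equivalence relations in $\mathsf{E}$.

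There is no real obstacle. The only point to check is the direction of the forcing condition: the successor is fixed and the initial world moves along $R_Q$. Starting the forcing condition from $R_M(v,v)$ and using symmetry of $R_Q$ to reach $u$ handles this correctly.
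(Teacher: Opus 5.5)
Your argument is correct, but it uses a different symmetry from the paper's. The paper starts from $R_M(u,u)$ and applies the forcing condition along $R_Q(u,v)$ to get $R_M(v,u)$. It then needs symmetry of $R_M$ to reverse this to $R_M(u,v)$. You start from $R_M(v,v)$ instead and use symmetry of $R_Q$, which is part of the definition of an orthoframe, to move the initial world from $v$ to $u$. The forcing condition then gives $R_M(u,v)$ directly.

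Your route therefore proves a stronger statement: every reflexive relation satisfying the forcing condition contains $R_Q$. Combined with Theorem~\ref{thm:component-invariance}, it even gives $R_C\subseteq R_M$ for any such relation, since $v\in R_M(v,-)=R_M(u,-)$ whenever $u\sim_Q v$.

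The paper's version loses nothing for its purposes, as you anticipated. The lemma is used only in Lemma~\ref{lem:class-restriction}, where $R_M$ is already an equivalence relation.
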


\begin{proof}
Assume $R_Q(u,v)$. Reflexivity gives $R_M(u,u)$. The forcing condition gives $R_M(v,u)$, and symmetry gives $R_M(u,v)$.
\end{proof}

This containment ensures that an equivalence class of the modal relation is also closed under compatibility, which makes restriction possible.

\begin{lemma}[Class Restriction]
\label{lem:class-restriction}
Let $\mathcal{M}=(W,R_Q,R_M,\rho)$ belong to $\mathsf{E}$, let $w\in W$, and let $E$ be the $R_M$-equivalence class of $w$. Define
\[
\mathcal{M}|_E=(E,R_Q|_E,E\times E,\rho_E),
\rho_E(p)=\rho(p)\cap E.
\]
Then $\rho_E(p)$ is stable in the restricted orthoframe, and for every formula $\alpha$ and every $v\in E$,
\[
v\models_{\mathcal{M}}\alpha
\quad\Longleftrightarrow\quad
v\models_{\mathcal{M}|_E}\alpha.
\]
\end{lemma}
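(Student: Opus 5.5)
The plan rests on one observation: $E$ is closed under $R_Q$. Since $R_M$ is an equivalence relation satisfying the forcing condition, Lemma~\ref{lem:rq-contained-rm} gives $R_Q\subseteq R_M$. Hence if $u\in E$ and $R_Q(u,v)$, then $R_M(u,v)$, so $v\in E$. Consequently every $R_Q$-neighbor of a point of $E$ lies in $E$. For any $Y\subseteq W$ and $v\in E$, this gives
\[
v\in Y^{\perp}\quad\Longleftrightarrow\quad v\in (Y\cap E)^{\perp_E},
\]
where $\perp_E$ denotes orthocomplementation in $(E,R_Q|_E)$. In other words, $Y^{\perp}\cap E=(Y\cap E)^{\perp_E}$, and applying this twice gives $Y^{\perp\perp}\cap E=(Y\cap E)^{\perp_E\perp_E}$.

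\textbf{Stability of $\rho_E(p)$.} Take $Y=\rho(p)$ in the identity above. Then
\[
\rho_E(p)^{\perp_E\perp_E}=\rho(p)^{\perp\perp}\cap E=\rho(p)\cap E=\rho_E(p).
\]
One should also check that $\mathcal{M}|_E$ is a quantum modal structure. The relation $R_Q|_E$ is reflexive and symmetric, and $E\times E$ trivially satisfies the forcing condition on $E$, since its successor set is always $E$.

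\textbf{Truth preservation.} This goes by induction on $\alpha$, proved simultaneously for all $v\in E$.
\begin{itemize}
\item The atomic case is immediate from the definition of $\rho_E$.
\item The conjunction case is immediate from the induction hypothesis.
\item For $\neg\alpha$: the $R_Q$-neighbors of $v$ in $W$ are exactly its $R_Q|_E$-neighbors, because $E$ is $R_Q$-closed, and all of them lie in $E$. Apply the induction hypothesis to those neighbors.
\item For $\Box\alpha$: $R_M(v,-)=E$ because $E$ is the $R_M$-class of $v$, and $(E\times E)(v,-)=E$ as well. The two box clauses therefore quantify over the same set of worlds, all in $E$, and the induction hypothesis finishes the case.
\end{itemize}

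The only real point is the $R_Q$-closure of $E$; it is where reflexivity and symmetry of $R_M$ together with the forcing condition are needed, and everything after it is routine bookkeeping.
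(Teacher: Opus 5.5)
Your proposal is correct and follows the paper's proof essentially step for step. You use Lemma~\ref{lem:rq-contained-rm} to get $R_Q$-closure of $E$, derive $(Y\cap E)^{\perp_E}=Y^{\perp}\cap E$ and apply it twice for stability, and then run the same induction. You also explicitly check that $\mathcal{M}|_E$ is a quantum modal structure, which the paper leaves implicit.
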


\begin{proof}
Lemma~\ref{lem:rq-contained-rm} shows that no world in $E$ is $R_Q$-related to any world in $W\setminus E$. For $Y\subseteq E$, let $Y^{\perp_E}$ denote its orthocomplement in the restricted orthoframe $(E,R_Q|_E)$. Then $Y^{\perp_E}=Y^{\perp}\cap E$. If $X$ is stable in $W$, the absence of edges from $E$ to $W\setminus E$ also gives $(X\cap E)^{\perp_E}=X^{\perp}\cap E$. Applying the same identity once more yields $(X\cap E)^{\perp_E\perp_E}=X^{\perp\perp}\cap E=X\cap E$. Hence $X\cap E$ is stable in the restricted frame.

We prove truth preservation by induction on $\alpha$. The atomic and conjunction cases are immediate. The negation case follows because every $R_Q$-neighbor of a member of $E$ also lies in $E$. For the modal case, the $R_M$-successors of every $v\in E$ in the original model are exactly the members of $E$, because $E$ is its equivalence class. The restricted modal relation is also $E\times E$. Thus the two box clauses quantify over the same set.
\end{proof}

\subsection{Connectedization}

The next construction turns an arbitrary orthoframe into a connected orthoframe while preserving its stable set operations through an embedding.

\begin{definition}[Connectedization by an Added Point]
Let $\mathcal{F}=(W,R_Q)$ be an orthoframe and let $*$ be a new point. Put $W^+=W\cup\{*\}$ and
\[
R_Q^+=R_Q\cup\{(*,x),(x,*):x\in W^+\}.
\]
For a stable set $X\subseteq W$, define
\[
e(X)=
\begin{cases}
W^+,&X=W,\\
X,&X\neq W.
\end{cases}
\]
\end{definition}

The next lemma controls the nontrivial case needed to prove preservation of orthocomplementation.

\begin{lemma}[Nontrivial Orthocomplements]
\label{lem:proper-stable-complement}
If $X$ is a nonempty proper stable subset of an orthoframe, then $X^{\perp}$ is also nonempty and proper.
\end{lemma}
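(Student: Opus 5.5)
The plan is to argue by contraposition in each of the two claims, using only $X=X^{\perp\perp}$, the antitonicity of $X\mapsto X^{\perp}$, and the boundary values $\varnothing^{\perp}=W$ and $W^{\perp}=\varnothing$. Let $X$ be a nonempty proper stable subset of $W$. We must show two things: that $X^{\perp}\neq\varnothing$, and that $X^{\perp}\neq W$.

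First we record the boundary values. Since no world is $R_Q$-related to a member of the empty set, $\varnothing^{\perp}=W$. Since $R_Q$ is reflexive, every world $w$ is related to itself, and $w\in W$; hence $W^{\perp}=\varnothing$.

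For properness, suppose $X^{\perp}=W$. Then $X=X^{\perp\perp}=W^{\perp}=\varnothing$, contradicting nonemptiness. One can also see this directly: any $x\in X$ satisfies $R_Q(x,x)$, so $x\notin X^{\perp}$.

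For nonemptiness, suppose $X^{\perp}=\varnothing$. Then $X=X^{\perp\perp}=\varnothing^{\perp}=W$, contradicting properness. This is the only step where stability is used. Without it the claim fails: a non-stable set can have an empty orthocomplement, for example any set that meets every $R_Q$-neighbourhood.

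There is no real obstacle here. The one point worth stating explicitly is which hypothesis each half needs: reflexivity alone gives $X^{\perp}\neq W$, while stability is what gives $X^{\perp}\neq\varnothing$. This lemma is presumably invoked next to check that $e(X^{\perp})=e(X)^{\perp^{+}}$ in the connectedization. There the added point $*$ lies outside the orthocomplement of every nonempty set, and the lemma guarantees that the case $X^{\perp}=W$ never arises for nontrivial $X$.
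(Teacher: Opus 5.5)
Your proof is correct and matches the paper's argument: both halves follow from $X=X^{\perp\perp}$ together with $\varnothing^{\perp}=W$ and $W^{\perp}=\varnothing$. Your side observation is also right: properness of $X^{\perp}$ needs only reflexivity and $X\neq\varnothing$, and this is the only half the connectedization step actually uses to get $e(X^{\perp})=X^{\perp}$.
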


\begin{proof}
If $X^{\perp}=\varnothing$, then stability gives \(X=X^{\perp\perp}=\varnothing^{\perp}=W,\) contrary to properness. If $X^{\perp}=W$, then \(X=X^{\perp\perp}=W^{\perp}=\varnothing,\) contrary to nonemptiness.
\end{proof}

We can therefore verify that the connectedization preserves the stable operations used by the language.

\begin{lemma}[Embedding of Stable Sets]
\label{lem:stable-embedding}
The map $e:\St(W,R_Q)\to\St(W^+,R_Q^+)$ satisfies
\[
e(X\cap Y)=e(X)\cap e(Y)
\]
and
\[
e(X^{\perp})=e(X)^{\perp_+},
\]
where $\perp_+$ denotes orthocomplementation with respect to $R_Q^+$.
\end{lemma}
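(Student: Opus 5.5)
The plan is to first show that $e$ lands in $\St(W^+,R_Q^+)$, and then check the two identities by splitting into cases according to whether the arguments are $\varnothing$, $W$, or nonempty proper. The basic computation I will use throughout concerns orthocomplements in the new frame. Since $*$ is $R_Q^+$-related to every point, any nonempty $Y\subseteq W^+$ satisfies $*\notin Y^{\perp_+}$. For a point $w\in W$ the new edges only add $*$ as a neighbor. Hence, for $Y\subseteq W$,
\[
Y^{\perp_+}=Y^{\perp}\quad(Y\neq\varnothing),\qquad \varnothing^{\perp_+}=W^+,\qquad (W^+)^{\perp_+}=\varnothing .
\]
In particular, for nonempty $Y\subseteq W$ the set $Y^{\perp_+}$ is the old orthocomplement, viewed inside $W^+$.

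For stability of $e(X)$ I consider three cases. If $X=W$, then $e(X)=W^+$, which is stable. If $X=\varnothing$, then $e(X)=\varnothing$, which is stable. If $X$ is nonempty and proper, then $X^{\perp}$ is nonempty by Lemma~\ref{lem:proper-stable-complement}. The displayed rule applied twice then gives $X^{\perp_+\perp_+}=(X^{\perp})^{\perp}=X$.

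For meets, if neither $X$ nor $Y$ equals $W$, then $X\cap Y\neq W$, and both sides equal $X\cap Y$. If $X=W$, the left side is $e(Y)$ and the right side is $W^+\cap e(Y)=e(Y)$. The case $Y=W$ is symmetric.

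For orthocomplements, I again split into three cases. If $X=W$, then $X^{\perp}=\varnothing$, so $e(X^{\perp})=\varnothing=(W^+)^{\perp_+}$. If $X=\varnothing$, then $X^{\perp}=W$ and $e(W)=W^+=\varnothing^{\perp_+}$. If $X$ is nonempty and proper, Lemma~\ref{lem:proper-stable-complement} shows that $X^{\perp}$ is proper, so $e(X^{\perp})=X^{\perp}$. On the other side, $e(X)^{\perp_+}=X^{\perp_+}=X^{\perp}$ by the displayed rule, since $X$ is nonempty.

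The main point needing care is this last case. It is the reason for Lemma~\ref{lem:proper-stable-complement}: without it, $X^{\perp}$ could equal $W$, and then $e(X^{\perp})$ would be $W^+$ rather than $X^{\perp}$. The remaining steps are direct bookkeeping about whether $*$ belongs to the relevant set.
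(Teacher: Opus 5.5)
Your proof is correct and follows essentially the same case analysis as the paper ($X=\varnothing$, $X=W$, and $X$ nonempty proper, with Lemma~\ref{lem:proper-stable-complement} used in the last case); the only difference is that you check stability of $e(X)$ directly at the outset, whereas the paper deduces it at the end from the orthocomplement identity and $X=X^{\perp\perp}$. One small correction to your closing commentary: for nonempty $X$, $X^{\perp}\neq W$ already follows from reflexivity of $R_Q$, so what the lemma (that is, stability of $X$) really supplies is $X^{\perp}\neq\varnothing$, which your stability step needs for the second application of your displayed rule.
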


\begin{proof}
The definition shows that $e(W)=W^+$ and that $e(X)=X$ for every proper stable set $X$. Hence $e$ is injective. The meet equation follows by cases. If $X=Y=W$, both sides are $W^+$. If at least one is proper, the added point belongs to neither side, and the equation reduces to the ordinary intersection identity on $W$.

We prove the orthocomplement equation. If $X=\varnothing$, then $e(X)=\varnothing$ and \(e(X)^{\perp_+}=W^+=e(W)=e(X^{\perp}).\)
If $X=W$, then $e(X)=W^+$ and \(e(X)^{\perp_+}=\varnothing=e(\varnothing)=e(X^{\perp}).\)
Suppose that $X$ is nonempty and proper. The added point is related to every member of $X$ and therefore lies outside $e(X)^{\perp_+}$. For $w\in W$, the added endpoint $*$ lies outside $e(X)=X$, so the new edge incident to $w$ imposes no additional constraint in the orthocomplement clause. Therefore \(e(X)^{\perp_+}=X^{\perp}.\)
By Lemma~\ref{lem:proper-stable-complement}, $X^{\perp}$ is nonempty and proper, so $e(X^{\perp})=X^{\perp}$. This proves the equation in every case. Since $e$ preserves orthocomplementation and $X=X^{\perp\perp}$, every $e(X)$ is stable.
\end{proof}

Having established preservation of meet and orthocomplementation, we extend the embedding to the interpretation of every formula.

\begin{lemma}[Truth Preservation under Connectedization]
\label{lem:truth-connectedization}
Let
\[
\mathcal{N}=(W,R_Q,W\times W,\rho)
\]
be a quantum modal structure with universal modal relation. Define
\[
\mathcal{N}^+=(W^+,R_Q^+,W^+\times W^+,\rho^+),\qquad \rho^+(p)=e(\rho(p)).
\]
Then $R_Q^+$ is connected, $\mathcal{N}^+\in\mathsf{U}$, and
\[
\llbracket\alpha\rrbracket_{\mathcal{N}^+}
=e(\llbracket\alpha\rrbracket_{\mathcal{N}})
\]
for every formula $\alpha$.
\end{lemma}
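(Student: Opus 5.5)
The plan has three parts: connectedness, membership of $\mathcal{N}^+$ in $\mathsf{U}$, and an induction on formulas.

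\emph{Connectedness.} The added point $*$ is $R_Q^+$-related to every point of $W^+$, so every pair of worlds is joined by a path of length at most two through $*$. Reflexivity of $R_Q^+$ follows from reflexivity of $R_Q$ together with the pair $(*,*)$, and symmetry holds by construction.

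\emph{Membership in $\mathsf{U}$.} The modal relation $W^+\times W^+$ is constant on the single component, so it satisfies the forcing condition by Theorem~\ref{thm:component-invariance}. The valuation is stable because each $\rho(p)$ is stable in $(W,R_Q)$, and Lemma~\ref{lem:stable-embedding} shows that $e$ maps stable sets to stable sets. Hence $\mathcal{N}^+$ is a quantum modal structure in $\mathsf{U}$.

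\emph{Induction.} I then prove $\llbracket\alpha\rrbracket_{\mathcal{N}^+}=e(\llbracket\alpha\rrbracket_{\mathcal{N}})$ by induction on $\alpha$. Proposition~\ref{prop:formula-stability} guarantees that every truth set in $\mathcal{N}$ is stable, so $e$ applies throughout.
\begin{itemize}
\item The atomic case holds by the definition of $\rho^+$.
\item For conjunction, $\llbracket\alpha\wedge\beta\rrbracket_{\mathcal{N}^+}=\llbracket\alpha\rrbracket_{\mathcal{N}^+}\cap\llbracket\beta\rrbracket_{\mathcal{N}^+}$, which by the induction hypothesis and the meet clause of Lemma~\ref{lem:stable-embedding} equals $e(\llbracket\alpha\wedge\beta\rrbracket_{\mathcal{N}})$.
\item For negation, the truth set of $\neg\alpha$ in $\mathcal{N}^+$ is $\llbracket\alpha\rrbracket_{\mathcal{N}^+}^{\perp_+}=e(\llbracket\alpha\rrbracket_{\mathcal{N}})^{\perp_+}$. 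By the orthocomplement clause of Lemma~\ref{lem:stable-embedding} this equals $e(\llbracket\alpha\rrbracket_{\mathcal{N}}^{\perp})=e(\llbracket\neg\alpha\rrbracket_{\mathcal{N}})$.
\end{itemize}

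\emph{The modal case is the one step needing care.} Since both modal relations are universal, $\llbracket\Box\alpha\rrbracket_{\mathcal{N}}$ is $W$ when $\llbracket\alpha\rrbracket_{\mathcal{N}}=W$ and $\varnothing$ otherwise. Likewise $\llbracket\Box\alpha\rrbracket_{\mathcal{N}^+}$ is $W^+$ when $\llbracket\alpha\rrbracket_{\mathcal{N}^+}=W^+$ and $\varnothing$ otherwise. By the induction hypothesis, $\llbracket\alpha\rrbracket_{\mathcal{N}^+}=e(\llbracket\alpha\rrbracket_{\mathcal{N}})$, and this equals $W^+$ exactly when $\llbracket\alpha\rrbracket_{\mathcal{N}}=W$, since $e(X)=X\subseteq W$ excludes $*$ whenever $X\neq W$. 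So in the first case both sides equal $W^+=e(W)$, and in the second both equal $\varnothing=e(\varnothing)$.

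The key point here is that $e$ sends $W$ to $W^+$ and every other set to itself. This makes "$*$ satisfies $\alpha$" equivalent to "$\alpha$ is valid in $\mathcal{N}$", which is exactly what universal accessibility requires. I expect no further obstacle, because Lemma~\ref{lem:stable-embedding} already carries out the delicate orthocomplement computation.
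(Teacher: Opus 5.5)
Your proposal is correct and follows the paper's proof essentially step for step. Both arguments get connectedness from the added point $*$, get stability of $\rho^+$ and the conjunction and negation steps from Lemma~\ref{lem:stable-embedding}, and settle the modal case by the $W$-versus-$\varnothing$ dichotomy under universal accessibility together with the fact that $e(X)=W^+$ exactly when $X=W$. The only cosmetic difference is that you obtain the forcing condition via Theorem~\ref{thm:component-invariance}, whereas the paper simply notes that a universal relation satisfies it.
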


\begin{proof}
The definition of $R_Q^+$ makes it reflexive and symmetric. The point $*$ is adjacent to every world, so $R_Q^+$ is connected. Lemma~\ref{lem:stable-embedding} shows that every $\rho^+(p)$ is stable. The modal relation is universal and therefore satisfies the forcing condition. Hence $\mathcal{N}^+$ is a quantum modal structure in $\mathsf{U}$.

We prove the truth set equation by induction. The atomic case is the definition of $\rho^+$. The conjunction and negation cases follow from Lemma~\ref{lem:stable-embedding}.

Let $X=\llbracket\alpha\rrbracket_{\mathcal{N}}$. Since the modal relation is universal, $\llbracket\Box\alpha\rrbracket_{\mathcal{N}}$ is $W$ when $X=W$ and is $\varnothing$ otherwise. By the induction hypothesis, the truth set of $\alpha$ in $\mathcal{N}^+$ is $e(X)$. The definition of $e$ gives $e(X)=W^+$ exactly when $X=W$. It follows that $\llbracket\Box\alpha\rrbracket_{\mathcal{N}^+}$ is $W^+$ when $X=W$ and is $\varnothing$ otherwise. Hence $\llbracket\Box\alpha\rrbracket_{\mathcal{N}^+}=e(\llbracket\Box\alpha\rrbracket_{\mathcal{N}})$.
\end{proof}

This truth preservation completes the transfer of countermodels required for the coincidence theorem.

For a class $\mathsf{K}$ of structures, let $\Log(\mathsf{K})$ denote the set of sequents with one conclusion that are valid in every member of $\mathsf{K}$.

\begin{theorem}[Logic Coincidence]
\label{thm:logic-coincidence}
For every set $\Gamma$ of formulas and every formula $\alpha$, the following are equivalent:
\begin{enumerate}[(i)]
\item $\Gamma\models_{\mathsf{E}}\alpha$.
\item $\Gamma\models_{\mathsf{C}}\alpha$.
\item $\Gamma\models_{\mathsf{U}}\alpha$.
\end{enumerate}
Consequently,
\[
\Log(\mathsf{E})=\Log(\mathsf{C})=\Log(\mathsf{U}).
\]
\end{theorem}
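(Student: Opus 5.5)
The plan is to prove the cycle of implications (i)$\Rightarrow$(ii)$\Rightarrow$(iii)$\Rightarrow$(i). The first two are immediate from the inclusions $\mathsf{U}\subseteq\mathsf{C}\subseteq\mathsf{E}$ noted after the definition of the frame classes: a consequence valid over a larger class is valid over any subclass. For $\mathsf{C}\subseteq\mathsf{E}$, $R_C$ is an equivalence relation satisfying the forcing condition. For $\mathsf{U}\subseteq\mathsf{C}$, on a connected frame $R_C=W\times W$. The content lies in (iii)$\Rightarrow$(i), which I will prove contrapositively by transferring a countermodel.

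Suppose $\Gamma\not\models_{\mathsf{E}}\alpha$, witnessed by $\mathcal{M}\in\mathsf{E}$ and a world $w$ with $w\models\Gamma$ and $w\not\models\alpha$. Let $E$ be the $R_M$-class of $w$. Lemma~\ref{lem:class-restriction} gives the structure $\mathcal{M}|_E=(E,R_Q|_E,E\times E,\rho_E)$. Its valuation is stable and it agrees with $\mathcal{M}$ on every formula at worlds of $E$. Hence $w$ still satisfies $\Gamma$ and fails $\alpha$ there. I also need $\mathcal{M}|_E$ to be a quantum modal structure. The universal relation $E\times E$ trivially satisfies the forcing condition, so this holds. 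The structure $\mathcal{M}|_E$ has exactly the form $\mathcal{N}$ required in Lemma~\ref{lem:truth-connectedization}, although $R_Q|_E$ need not be connected.

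Next apply the connectedization to obtain $\mathcal{N}^+\in\mathsf{U}$, with $\llbracket\beta\rrbracket_{\mathcal{N}^+}=e(\llbracket\beta\rrbracket_{\mathcal{N}})$ for every $\beta$. Since $w\neq *$ and $e(X)\cap E=X$ for every stable $X\subseteq E$ (as $e(X)$ is either $X$ or $X\cup\{*\}$), membership of $w$ is preserved. So $w\models_{\mathcal{N}^+}\gamma$ for all $\gamma\in\Gamma$ and $w\not\models_{\mathcal{N}^+}\alpha$, which gives $\Gamma\not\models_{\mathsf{U}}\alpha$. The argument works for arbitrary, possibly infinite, $\Gamma$, because it is pointwise at $w$.

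Finally, the equality of the $\Log$ sets follows by specializing to finite $\Gamma$, since validity of $\Gamma\vdash\alpha$ in a structure is exactly $\Gamma\models_{\mathcal{M}}\alpha$. I expect no serious obstacle, since the heavy lifting is in Lemmas~\ref{lem:class-restriction} and \ref{lem:truth-connectedization}. The only delicate point is checking that the restricted model fits the hypothesis of the connectedization lemma. In particular, its modal relation must be literally universal on its domain and its valuation stable in the restricted orthoframe, and both are supplied by Lemma~\ref{lem:class-restriction}.
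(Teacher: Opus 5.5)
Your proposal is correct and follows the paper's proof. Both arguments get the easy implications from the inclusions $\mathsf{U}\subseteq\mathsf{C}\subseteq\mathsf{E}$, and both transfer a countermodel in $\mathsf{E}$ to $\mathsf{U}$ by restricting to the $R_M$-class of the failing world (Lemma~\ref{lem:class-restriction}) and then connectedizing (Lemma~\ref{lem:truth-connectedization}). Your additional checks, that $\mathcal{M}|_E$ is a genuine quantum modal structure and that $e(X)\cap E=X$ preserves truth at $w$, spell out what the paper states more briefly as ``preserves the truth of every formula at every old world.''
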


\begin{proof}
The inclusions $\mathsf{U}\subseteq\mathsf{C}\subseteq\mathsf{E}$ give \(\Log(\mathsf{E})\subseteq\Log(\mathsf{C})\subseteq\Log(\mathsf{U}).\)
It remains to show that every countermodel in $\mathsf{E}$ yields a countermodel in $\mathsf{U}$.

Suppose $\Gamma\not\models_{\mathsf{E}}\alpha$. Choose $\mathcal{M}\in\mathsf{E}$ and $w$ such that $w\models\Gamma$ and $w\not\models\alpha$. Let $E$ be the $R_M$-equivalence class of $w$. Lemma~\ref{lem:class-restriction} produces a model $\mathcal{M}|_E$ with universal modal relation in which the same truth and falsity statements hold at $w$. Apply Lemma~\ref{lem:truth-connectedization} to $\mathcal{M}|_E$. The resulting model belongs to $\mathsf{U}$ and preserves the truth of every formula at every old world. Hence it is a countermodel to $\Gamma\models_{\mathsf{U}}\alpha$.
\end{proof}

The following remark states a limitation of the argument.

\begin{remark}[Restriction to One Modality]
With several modal relations, restriction to one equivalence class may remove successors for another modality, and formulas connecting different modalities can distinguish exact component relations from coarser equivalence relations.
\end{remark}

\section{Proof Theory of the Component Modality}
\label{sec:proof-theory}

The proof theory of the component modality is developed in this section. After the soundness of T, 4, and B is established, maximal consistent pairs are used to construct the canonical model and prove completeness for the three frame classes.

\subsection{Modal Axioms T, 4, B}

\begin{theorem}[Soundness]
\label{thm:soundness-eq}
Every sequent derivable in $\QMLE$ is locally valid in every structure in $\mathsf{E}$. In particular, if $\Gamma\vdash_{\QMLE}\alpha$, then $\Gamma\models_{\mathsf{E}}\alpha$.
\end{theorem}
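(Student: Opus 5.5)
The plan is to argue by induction on derivations in $\QMLE$. Every structure in $\mathsf{E}$ is a quantum modal structure, so the base case MEM, the inductive step for K, and all nonmodal axioms and rules are already handled by the soundness of $\QML$ established in Section~\ref{subsec:base-calculus} and Appendix~\ref{app:base-soundness}. What remains is to show that each instance of T, 4, and B is locally valid in every $\mathcal{M}\in\mathsf{E}$. The final clause then follows because local validity of a sequent with one conclusion is the consequence relation $\models_{\mathsf{E}}$. Concretely, if $\Gamma_0\vdash\alpha$ is derivable for a finite $\Gamma_0\subseteq\Gamma$, then any world satisfying $\Gamma$ satisfies $\Gamma_0$ and hence $\alpha$.

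The axioms T and 4 are handled by the usual arguments. For T, reflexivity gives $R_M(w,w)$, so $w\models\Box\alpha$ implies $w\models\alpha$. For 4, suppose $w\models\Box\alpha$, $R_M(w,v)$ and $R_M(v,u)$. Transitivity gives $R_M(w,u)$, so $u\models\alpha$. Hence $v\models\Box\alpha$, and therefore $w\models\Box\Box\alpha$.

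The main point is B, because $\Diamond$ is defined as $\neg\Box\neg$ with quantum negation, not classical negation. Take $w\models\alpha$ and $R_M(w,v)$; we must show $v\models\neg\Box\neg\alpha$. Symmetry gives $R_M(v,w)$. Reflexivity of $R_Q$ gives $w\not\models\neg\alpha$, since $w$ is $R_Q$-related to itself and satisfies $\alpha$. So $v\not\models\Box\neg\alpha$, because its successor $w$ fails $\neg\alpha$. This only gives the failure of $\Box\neg\alpha$ at $v$, not the truth of $\neg\Box\neg\alpha$.

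To close this gap I will use Corollary~\ref{cor:local-mem}: every world satisfies at least one of $\Box\neg\alpha$ and $\neg\Box\neg\alpha$. Since $v$ fails the first, it satisfies $\neg\Box\neg\alpha=\Diamond\alpha$. As an alternative, one can argue directly from Theorem~\ref{thm:component-invariance}. Every $R_Q$-neighbor $v'$ of $v$ has the same successor set, so $R_M(v',w)$ and $v'\not\models\Box\neg\alpha$. By the negation clause, $v\models\neg\Box\neg\alpha$. Since $v$ was an arbitrary successor of $w$, this gives $w\models\Box\Diamond\alpha$ and completes the proof. I expect this step, where the forcing condition is used to turn the failure of $\Box\neg\alpha$ into the truth of its quantum negation, to be the only nonroutine part.
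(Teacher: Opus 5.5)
Your proposal is correct and follows the paper's proof. T and 4 come from reflexivity and transitivity, and for B you combine symmetry of $R_M$ with the forcing condition so that every $R_Q$-neighbor of $v$ has $w$ as a successor and hence fails $\Box\neg\alpha$. Your appeal to Corollary~\ref{cor:local-mem} packages that same forcing step, and your alternative via Theorem~\ref{thm:component-invariance} is the paper's direct argument, which applies the forcing condition to $R_M(v,w)$ and $R_Q(v,u)$.
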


\begin{proof}
The axioms and rules of the base calculus preserve local validity. We verify the three added axioms.

For T, suppose $w\models\Box\alpha$. Reflexivity of $R_M$ gives $R_M(w,w)$, so $w\models\alpha$.

For 4, suppose $w\models\Box\alpha$ and let $v$ satisfy $R_M(w,v)$. To show $v\models\Box\alpha$, let $u$ satisfy $R_M(v,u)$. Transitivity gives $R_M(w,u)$, hence $u\models\alpha$. Therefore $w\models\Box\Box\alpha$.

For B, suppose $w\models\alpha$. Let $v$ satisfy $R_M(w,v)$. We prove $v\models\Diamond\alpha$, which means $v\models\neg\Box\neg\alpha$. Fix $u$ with $R_Q(v,u)$. Symmetry of $R_M$ gives $R_M(v,w)$. The forcing condition applied to $R_M(v,w)$ and $R_Q(v,u)$ yields $R_M(u,w)$. Reflexivity of $R_Q$ and $w\models\alpha$ imply $w\not\models\neg\alpha$. Hence $u\not\models\Box\neg\alpha$. Since this holds for every $R_Q$-neighbor $u$ of $v$, we obtain $v\models\neg\Box\neg\alpha$. Thus $w\models\Box\Diamond\alpha$.
\end{proof}

The proof of B uses both symmetry of $R_M$ and the forcing condition. Classical Kripke semantics obtains the corresponding result from ordinary existential possibility.

\subsection{Maximal Consistent Pairs}

In orthologic, a consistent set need not have an extension in which, for every formula $\alpha$, either $\alpha$ or $\neg\alpha$ is present. The canonical construction represents truth and falsity in two coordinates. Every formula belongs to exactly one coordinate of a maximal pair, and membership in the right coordinate carries no identification with membership of the negated formula in the left coordinate. The extension argument below uses finitary derivations, CUT, and Zorn's lemma. This saturation in two coordinates matches the local interpretation of sequents with multiple conclusions.

\begin{definition}[Consistent Pair]
A pair $(A,B)$ of sets of formulas is \emph{consistent} if there are no finite sets $\Gamma\subseteq A$ and $\Delta\subseteq B$ such that $\Gamma\vdash_{\QMLE}\Delta$. It is \emph{maximal} if no proper coordinatewise extension is consistent.
\end{definition}

The first step is to extend every consistent pair to a maximal consistent pair.

\begin{lemma}[Extension of Consistent Pairs]
\label{lem:max-pair-extension}
Every consistent pair extends to a maximal consistent pair.
\end{lemma}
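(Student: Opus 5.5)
The plan is a Zorn's lemma argument on the poset of consistent pairs extending the given pair $(A_0,B_0)$, ordered coordinatewise by inclusion.

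Let $\mathcal{P}$ be the family of consistent pairs $(A,B)$ with $A_0\subseteq A$ and $B_0\subseteq B$. The family is nonempty because it contains $(A_0,B_0)$. I order it by $(A,B)\leq(A',B')$ iff $A\subseteq A'$ and $B\subseteq B'$. A maximal element of $\mathcal{P}$ is then maximal in the sense of the definition, since any proper coordinatewise consistent extension of it would again lie in $\mathcal{P}$.

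The key step is that every nonempty chain $\{(A_\lambda,B_\lambda)\}_{\lambda\in\Lambda}$ in $\mathcal{P}$ has an upper bound in $\mathcal{P}$. I take the candidate $(\bigcup_\lambda A_\lambda,\bigcup_\lambda B_\lambda)$, which clearly extends $(A_0,B_0)$. To show it is consistent, suppose finite $\Gamma\subseteq\bigcup A_\lambda$ and $\Delta\subseteq\bigcup B_\lambda$ satisfy $\Gamma\vdash_{\QMLE}\Delta$. Each of the finitely many formulas in $\Gamma\cup\Delta$ lies in some coordinate of some member of the chain. Because the chain is linearly ordered, a single index $\mu$ satisfies $\Gamma\subseteq A_\mu$ and $\Delta\subseteq B_\mu$. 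This contradicts the consistency of $(A_\mu,B_\mu)$. If $\Gamma\cup\Delta$ is empty, any $\mu$ works, since every member of the chain is consistent and so $\varnothing\vdash\varnothing$ is not derivable. Zorn's lemma then yields a maximal element of $\mathcal{P}$.

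The only delicate point is finitarity. Consistency is defined through finite subsets, and derivations are finite objects, so the compactness step above goes through. Notably, the rules WKN and CUT are not needed for this lemma itself. They will matter later, in showing that a maximal pair places every formula in exactly one coordinate: if neither $(A\cup\{\varphi\},B)$ nor $(A,B\cup\{\varphi\})$ is consistent, CUT on $\varphi$ after weakening contradicts the consistency of $(A,B)$. I would record that observation here only as a remark, since the lemma as stated requires nothing beyond the chain argument.
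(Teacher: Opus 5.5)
Your proposal is correct and follows essentially the same Zorn's lemma argument as the paper: unions of chains of consistent extensions of $(A_0,B_0)$ stay consistent because inconsistency witnesses are finite, and a maximal element of that poset is maximal among all consistent pairs. Restricting to nonempty chains and noting that the poset contains $(A_0,B_0)$ also covers the empty chain, which is slightly more careful than the paper's wording.
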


\begin{proof}
Fix a consistent pair $(A_0,B_0)$ and consider the set of its consistent coordinatewise extensions, ordered by coordinatewise inclusion. Let $\{(A_\lambda,B_\lambda)\}_{\lambda\in\Lambda}$ be a chain in this ordered set, and put $A=\bigcup_{\lambda\in\Lambda}A_\lambda$ and $B=\bigcup_{\lambda\in\Lambda}B_\lambda$. The pair $(A,B)$ extends $(A_0,B_0)$. If it were inconsistent, finite sets $\Gamma\subseteq A$ and $\Delta\subseteq B$ would satisfy $\Gamma\vdash_{\QMLE}\Delta$. Finiteness and the chain condition place every formula in $\Gamma\cup\Delta$ in one member $(A_\lambda,B_\lambda)$ of the chain, which contradicts the consistency of $(A_\lambda,B_\lambda)$. Thus every chain has an upper bound. Zorn's lemma gives a maximal consistent extension of $(A_0,B_0)$. Any proper consistent extension of this pair would also extend $(A_0,B_0)$, so the resulting pair is maximal among all consistent pairs.
\end{proof}

Maximality converts consistency into the partition and saturation properties required by the canonical construction.

\begin{lemma}[Properties of Maximal Pairs]
\label{lem:partition-saturation}
Let $(A,B)$ be a maximal consistent pair. Then:
\begin{enumerate}[(i)]
\item $A\cap B=\varnothing$ and $A\cup B$ is the set of all formulas.
\item if $\Gamma\subseteq A$ is finite and $\Gamma\vdash_{\QMLE}\Delta$, then $\Delta\cap A\neq\varnothing$.
\item if $A\vdash_{\QMLE}\alpha$, then $\alpha\in A$.
\item no formula $\alpha$ satisfies $\alpha,\neg\alpha\in A$.
\item for every $\alpha$, exactly one of $\Box\alpha$ and $\neg\Box\alpha$ belongs to $A$.
\end{enumerate}
\end{lemma}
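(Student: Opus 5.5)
I would prove the five items in order, each by contradiction with consistency or maximality, using WKN and CUT to combine finite witnesses. For (i), disjointness holds because any $\alpha\in A\cap B$ gives the axiom $\alpha\vdash\alpha$ with $\{\alpha\}\subseteq A$ and $\{\alpha\}\subseteq B$, contradicting consistency. For exhaustiveness, suppose $\alpha\notin A\cup B$. Maximality makes both $(A\cup\{\alpha\},B)$ and $(A,B\cup\{\alpha\})$ inconsistent. This yields finite $\Gamma_1,\Gamma_2\subseteq A$ and $\Delta_1,\Delta_2\subseteq B$ with $\alpha,\Gamma_1\vdash\Delta_1$ and $\Gamma_2\vdash\Delta_2,\alpha$. (Weakening covers the case where $\alpha$ does not actually occur.) CUT on $\alpha$ then gives $\Gamma_1,\Gamma_2\vdash\Delta_1,\Delta_2$, contradicting consistency of $(A,B)$. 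This cut step is the only genuinely nontrivial point; everything afterwards reduces to it.

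For (ii), let $\Gamma\subseteq A$ be finite with $\Gamma\vdash_{\QMLE}\Delta$. If $\Delta\cap A=\varnothing$, then (i) gives $\Delta\subseteq B$, which contradicts consistency. Item (iii) is the special case $\Delta=\{\alpha\}$, using the convention that $A\vdash\alpha$ means $\Gamma_0\vdash\alpha$ for some finite $\Gamma_0\subseteq A$.

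For (iv), suppose $\alpha,\neg\alpha\in A$. From $\alpha\vdash\alpha$, rule ($\neg$l) gives $\neg\alpha,\alpha\vdash\varnothing$. Since the empty set is a finite subset of $B$, this contradicts consistency.

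For (v), apply MEM with $\Gamma=\varnothing$ to get $\varnothing\vdash\Box\alpha,\neg\Box\alpha$. By (ii), with the empty subset of $A$, at least one of $\Box\alpha$ and $\neg\Box\alpha$ lies in $A$. By (iv), applied to $\Box\alpha$, they cannot both lie in $A$.

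I expect the only delicate step to be the exhaustiveness part of (i). There one must check that the two inconsistency witnesses really have the shapes needed to cut on $\alpha$. If $\alpha$ is absent from a witness, that witness already contradicts consistency directly, or weakening inserts $\alpha$.
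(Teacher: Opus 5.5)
Your proposal is correct and follows the paper's proof essentially step for step. Disjointness comes from AX, exhaustiveness from CUT on the two inconsistency witnesses, (ii) and (iii) from (i), (iv) from AX with $\neg$l, and (v) from MEM together with (ii) and (iv). The only difference is cosmetic: where a witness omits $\alpha$, you also allow inserting it by weakening, while the paper just notes that such a witness would already make $(A,B)$ inconsistent.
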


\begin{proof}
If $\alpha\in A\cap B$, AX makes the pair inconsistent. Thus the coordinates are disjoint.

Suppose $\alpha\notin A\cup B$. If $(A\cup\{\alpha\},B)$ were inconsistent, every finite witness would use $\alpha$ on the left, since a witness omitting $\alpha$ would already make $(A,B)$ inconsistent. Thus finite $\Gamma_1\subseteq A$ and $\Delta_1\subseteq B$ would satisfy $\Gamma_1,\alpha\vdash_{\QMLE}\Delta_1$. Similarly, inconsistency of $(A,B\cup\{\alpha\})$ would yield finite $\Gamma_2\subseteq A$ and $\Delta_2\subseteq B$ satisfying $\Gamma_2\vdash_{\QMLE}\Delta_2,\alpha$. Applying CUT to the second sequent and then the first gives $\Gamma_1,\Gamma_2\vdash_{\QMLE}\Delta_1,\Delta_2$, contrary to the consistency of $(A,B)$. At least one coordinate can therefore be enlarged, and maximality places $\alpha$ in $A\cup B$. This proves (i).

For (ii), suppose $\Gamma\subseteq A$, $\Gamma\vdash_{\QMLE}\Delta$, and $\Delta\cap A=\varnothing$. By (i), $\Delta\subseteq B$, so the pair is inconsistent. Item (iii) is the special case of (ii) in which $\Delta$ has a single member.

For (iv), AX and $\neg$l give \(\alpha,\neg\alpha\vdash_{\QMLE}\varnothing.\)
Thus the two formulas cannot both belong to $A$.

For (v), MEM gives \(\varnothing\vdash_{\QMLE}\Box\alpha,\neg\Box\alpha.\)
Item (ii) places at least one conclusion in $A$, and item (iv) excludes both.
\end{proof}

The next consequence gives the form of double negation used in the canonical compatibility arguments.

\begin{lemma}[Double Negation in Maximal Pairs]
\label{lem:double-negation-pairs}
For every maximal consistent pair $(A,B)$,
\[
\alpha\in A\quad\Longleftrightarrow\quad\neg\neg\alpha\in A.
\]
\end{lemma}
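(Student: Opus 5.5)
Both directions follow from Lemma~\ref{lem:partition-saturation}(iii). In each case we derive the relevant one-conclusion sequent in $\QMLE$ and then invoke deductive closure of the left coordinate.

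For the forward direction, assume $\alpha\in A$. Applying ($\neg\neg$r) to the axiom $\alpha\vdash\alpha$, with $\Gamma=\{\alpha\}$ and $\Delta=\varnothing$, gives
\[
\alpha\vdash_{\QMLE}\neg\neg\alpha.
\]
Since $\{\alpha\}\subseteq A$, we have $A\vdash_{\QMLE}\neg\neg\alpha$. Lemma~\ref{lem:partition-saturation}(iii) then yields $\neg\neg\alpha\in A$.

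For the converse, assume $\neg\neg\alpha\in A$. Applying ($\neg\neg$l) to the axiom $\alpha\vdash\alpha$, with $\Gamma=\varnothing$ and $\Delta=\{\alpha\}$, gives
\[
\neg\neg\alpha\vdash_{\QMLE}\alpha.
\]
The same appeal to Lemma~\ref{lem:partition-saturation}(iii) places $\alpha$ in $A$.

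The only point to check is that the two rules apply with the side contexts chosen as indicated. Both rules are stated with arbitrary $\Gamma$ and $\Delta$, so empty contexts are allowed and no weakening or cut is needed. There is no real obstacle. If one wanted to avoid ($\neg\neg$l), the converse could instead use ($\neg$r) twice from AX, together with item (iii). The direct route is simpler.
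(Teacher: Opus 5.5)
Your argument is correct and matches the paper's proof. You obtain $\alpha\vdash\neg\neg\alpha$ from AX by ($\neg\neg$r) and $\neg\neg\alpha\vdash\alpha$ from AX by ($\neg\neg$l), then apply Lemma~\ref{lem:partition-saturation}(iii) in each direction.

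Delete the closing aside, however. Two applications of ($\neg$r) to $\alpha\vdash\alpha$ only give $\neg\neg\alpha\vdash\neg\neg\alpha$, and ($\neg\neg$l) cannot be avoided. Every other rule and axiom of $\QMLE$ stays locally sound when atoms may denote non-stable sets. Yet two compatible worlds $a,b$ with universal $R_M$ and $\rho(p)=\{a\}$ give $b\models\neg\neg p$ but $b\not\models p$.
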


\begin{proof}
We have $\alpha\vdash_{\QMLE}\neg\neg\alpha$ by AX and $\neg\neg$r. We also have $\neg\neg\alpha\vdash_{\QMLE}\alpha$ by AX and $\neg\neg$l. Apply Lemma~\ref{lem:partition-saturation}(iii) in both directions.
\end{proof}

\subsection{Canonical Structure}

Let $W^c$ be the set of left coordinates of maximal consistent pairs. The right coordinate is uniquely the complement of the left coordinate by Lemma~\ref{lem:partition-saturation}(i), so a world will be denoted by $u$, $v$, or $w$, as context requires.

Define
\begin{align*}
R_Q^c(u,v)
&\quad\Longleftrightarrow\quad
\forall\alpha\,\bigl(\alpha\in u\Longrightarrow\neg\alpha\notin v\bigr),\\
R_M^c(u,v)
&\quad\Longleftrightarrow\quad
\forall\alpha\,\bigl(\Box\alpha\in u\Longrightarrow\alpha\in v\bigr),\\
\rho^c(p)
&\quad:=\quad
\{u\in W^c\mid p\in u\}.
\end{align*}

\begin{remark}[Existence of Canonical Worlds]
The set $W^c$ is nonempty. By soundness, every derivable sequent is valid in a component model with one world, whereas the empty sequent is not valid there. Hence $\varnothing\nvdash_{\QMLE}\varnothing$, so $(\varnothing,\varnothing)$ is a consistent pair. Lemma~\ref{lem:max-pair-extension} supplies a maximal consistent pair and therefore a member of $W^c$.
\end{remark}

\begin{lemma}[Canonical Compatibility Relation]
\label{lem:canonical-rq}
The relation $R_Q^c$ is reflexive and symmetric.
\end{lemma}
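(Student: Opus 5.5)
Both properties follow directly from the definition of $R_Q^c$ together with Lemma~\ref{lem:partition-saturation} and Lemma~\ref{lem:double-negation-pairs}. No new construction is needed.

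\emph{Reflexivity.} We must show that $\alpha\in u$ implies $\neg\alpha\notin u$ for every maximal consistent pair with left coordinate $u$. This is exactly Lemma~\ref{lem:partition-saturation}(iv), since $\alpha$ and $\neg\alpha$ cannot both lie in $u$.

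\emph{Symmetry.} Assume $R_Q^c(u,v)$, and suppose for contradiction that $R_Q^c(v,u)$ fails. Then some $\beta$ satisfies $\beta\in v$ and $\neg\beta\in u$. Apply the hypothesis $R_Q^c(u,v)$ to the formula $\neg\beta\in u$. This gives $\neg\neg\beta\notin v$. By Lemma~\ref{lem:double-negation-pairs}, $\beta\in v$ implies $\neg\neg\beta\in v$, a contradiction. Hence $R_Q^c(v,u)$.

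The only step requiring care is symmetry. There, the hypothesis has to be applied to the negated formula $\neg\beta$ rather than to $\beta$, and the double negation lemma is then used to move back from $\neg\neg\beta$ to $\beta$. Both facts are available from the preceding lemmas, so the obstacle is mild. In particular, the argument does not need the modal axioms or MEM; it relies only on AX, $\neg$l, and the double negation rules through those lemmas.
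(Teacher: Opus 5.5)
Your proof is correct and essentially the same as the paper's. Reflexivity is Lemma~\ref{lem:partition-saturation}(iv), and symmetry applies $R_Q^c(u,v)$ to $\neg\beta\in u$, which is contradicted by $\neg\neg\beta\in v$ from Lemma~\ref{lem:double-negation-pairs}. (One small correction to your closing remark: Lemma~\ref{lem:double-negation-pairs} depends on Lemma~\ref{lem:partition-saturation}(iii), which rests on the partition property (i), and that proof uses CUT; so the argument depends on more than AX, $\neg$l, and the double negation rules.)
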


\begin{proof}
Reflexivity follows from Lemma~\ref{lem:partition-saturation}(iv). Suppose $R_Q^c(u,v)$ and assume that symmetry fails. Then some $\beta$ satisfies $\beta\in v$ and $\neg\beta\in u$. Lemma~\ref{lem:double-negation-pairs} gives $\neg\neg\beta\in v$. Taking $\alpha=\neg\beta$ contradicts $R_Q^c(u,v)$.
\end{proof}

Reflexivity and symmetry provide the basic orthoframe structure. The next lemma supplies compatible witnesses when a negated formula is absent.

\begin{lemma}[Compatibility Extension]
\label{lem:compatibility-extension}
\[
\neg\alpha\notin u
\quad\Longrightarrow\quad
\exists v\in W^c\,
\bigl(R_Q^c(u,v)\text{ and }\alpha\in v\bigr).
\]
\end{lemma}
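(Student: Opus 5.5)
The plan is to build $v$ as the left coordinate of a maximal consistent extension of a pair that encodes both requirements at once. Compatibility $R_Q^c(u,v)$ says that $\neg\beta\notin v$ for every $\beta\in u$. By Lemma~\ref{lem:partition-saturation}(i), this amounts to $\neg\beta$ lying in the right coordinate of $v$. So I would consider the pair
\[
(A_0,B_0):=\bigl(\{\alpha\},\ \{\neg\beta\mid\beta\in u\}\bigr).
\]
If this pair is consistent, Lemma~\ref{lem:max-pair-extension} extends it to a maximal consistent pair $(A,B)$, and I put $v:=A\in W^c$. Then $\alpha\in v$ holds directly. For every $\beta\in u$ we have $\neg\beta\in B$, and disjointness of the coordinates gives $\neg\beta\notin v$, which is exactly $R_Q^c(u,v)$.

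The main step, and the only real obstacle, is the consistency of $(A_0,B_0)$. Suppose instead that there are finite $\Gamma\subseteq\{\alpha\}$ and $\beta_1,\ldots,\beta_n\in u$ ($n\geq 0$) with $\Gamma\vdash_{\QMLE}\neg\beta_1,\ldots,\neg\beta_n$.

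First I would normalize the antecedent. If $\Gamma=\varnothing$, I apply WKN to obtain $\alpha\vdash\neg\beta_1,\ldots,\neg\beta_n$, so in every case this sequent is derivable. Its antecedent is the single formula $\alpha$, which is precisely the form that the rule $\neg$r demands. Applying $\neg$r then yields
\[
\neg\neg\beta_1,\ldots,\neg\neg\beta_n\vdash_{\QMLE}\neg\alpha .
\]
When $n=0$, this reads $\varnothing\vdash\neg\alpha$.

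Next I transfer the antecedent into $u$. Each $\beta_i\in u$, so Lemma~\ref{lem:double-negation-pairs} gives $\neg\neg\beta_i\in u$. Lemma~\ref{lem:partition-saturation}(iii) then places $\neg\alpha$ in $u$; in the case $n=0$ this uses the empty finite subset of $u$. This contradicts the hypothesis $\neg\alpha\notin u$, so the pair is consistent.

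The point requiring care is that $\neg$r only applies with one antecedent formula and no side context. This is why the weakening step and the placement of $\alpha$ alone on the left of the pair are chosen as above. The double negation lemma is what converts the $\neg\neg\beta_i$ produced by $\neg$r back into members of $u$.
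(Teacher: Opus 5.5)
Your proposal is correct and follows the paper's proof essentially step for step. It uses the same pair $(\{\alpha\},\{\neg\beta\mid\beta\in u\})$, the same WKN, $\neg$r and double-negation argument for consistency (including the $n=0$ case via the empty finite subset of $u$), and the same extension by Lemma~\ref{lem:max-pair-extension}, with disjointness of the coordinates yielding $R_Q^c(u,v)$.
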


\begin{proof}
Consider the pair \(\left(\{\alpha\},\{\neg\beta\mid\beta\in u\}\right).\)
We first prove that it is consistent. An inconsistency witness would have a left side contained in $\{\alpha\}$ and a finite right side of the form $\{\neg\beta_1,\ldots,\neg\beta_n\}$, where every $\beta_i$ belongs to $u$ and $n$ may be zero. WKN adds $\alpha$ when the witness has an empty left side, so we obtain $\alpha\vdash_{\QMLE}\neg\beta_1,\ldots,\neg\beta_n$. The rule $\neg$r then gives $\neg\neg\beta_1,\ldots,\neg\neg\beta_n\vdash_{\QMLE}\neg\alpha$. Lemma~\ref{lem:double-negation-pairs} places every $\neg\neg\beta_i$ in $u$. When $n=0$, the antecedent of the last sequent is empty and is still a finite subset of $u$. Lemma~\ref{lem:partition-saturation}(iii) gives $\neg\alpha\in u$, contrary to the hypothesis.

Extend the pair to a maximal pair $(v,B)$ by Lemma~\ref{lem:max-pair-extension}. Then $\alpha\in v$. For every $\beta\in u$, the formula $\neg\beta$ belongs to $B$ and therefore not to $v$. This is exactly $R_Q^c(u,v)$.
\end{proof}

This witness property is sufficient to prove that the canonical atomic valuation is stable.

\begin{lemma}[Stability of the Canonical Valuation]
\label{lem:canonical-valuation-stable}
For every atomic formula $p$, the set $\rho^c(p)$ is $R_Q^c$-stable.
\end{lemma}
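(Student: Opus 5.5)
The plan is to prove the more general fact that for every formula $\alpha$ the set $\widehat{\alpha}:=\{u\in W^c\mid \alpha\in u\}$ is $R_Q^c$-stable; the case $\alpha=p$ is the statement. Since every set is contained in its double orthocomplement, it suffices to show $\widehat{\alpha}^{\perp\perp}\subseteq\widehat{\alpha}$. I will argue by contraposition: fix $u\in W^c$ with $\alpha\notin u$, and produce a world $v$ with $R_Q^c(u,v)$ and $v\in\widehat{\alpha}^{\perp}$. Then $u$ is related to a member of $\widehat{\alpha}^{\perp}$, so $u\notin\widehat{\alpha}^{\perp\perp}$.

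The first step identifies $\widehat{\alpha}^{\perp}$ with $\widehat{\neg\alpha}$. If $\neg\alpha\in v$ and $R_Q^c(v,w)$, then symmetry (Lemma~\ref{lem:canonical-rq}) gives $R_Q^c(w,v)$, and the definition of $R_Q^c$ forbids $\alpha\in w$. Hence $v\in\widehat{\alpha}^{\perp}$. Conversely, if $\neg\alpha\notin v$, Lemma~\ref{lem:compatibility-extension} yields $w$ with $R_Q^c(v,w)$ and $\alpha\in w$, so $v\notin\widehat{\alpha}^{\perp}$. Thus $\widehat{\alpha}^{\perp}=\widehat{\neg\alpha}$.

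The second step supplies the witness. Since $\alpha\notin u$, Lemma~\ref{lem:double-negation-pairs} gives $\neg\neg\alpha\notin u$. Applying Lemma~\ref{lem:compatibility-extension} to the formula $\neg\alpha$ yields $v$ with $R_Q^c(u,v)$ and $\neg\alpha\in v$. By the first step, $v\in\widehat{\alpha}^{\perp}$, so $u\notin\widehat{\alpha}^{\perp\perp}$. This completes the contrapositive and shows that $\widehat{\alpha}$ is stable.

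The main obstacle is the identification $\widehat{\alpha}^{\perp}=\widehat{\neg\alpha}$. Its direction $\widehat{\alpha}^{\perp}\subseteq\widehat{\neg\alpha}$ depends on the compatibility extension lemma, and its other direction on the symmetry of $R_Q^c$. Both are available from the preceding lemmas, so I expect the argument to go through without new derivations in $\QMLE$. The general form for arbitrary $\alpha$ costs nothing extra and will be reused in the truth lemma.
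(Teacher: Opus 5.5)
Your proposal is correct and follows the paper's proof. Both argue by contraposition: Lemma~\ref{lem:double-negation-pairs} gives $\neg\neg\alpha\notin u$, Lemma~\ref{lem:compatibility-extension} applied to $\neg\alpha$ gives $v$ with $R_Q^c(u,v)$ and $\neg\alpha\in v$, and one then checks $v\in\widehat{\alpha}^{\perp}$; the paper's argument also never uses the atomicity of $p$.

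The differences are cosmetic. You verify $v\in\widehat{\alpha}^{\perp}$ through the symmetry of $R_Q^c$, where the paper applies double negation directly. Only the inclusion $\widehat{\neg\alpha}\subseteq\widehat{\alpha}^{\perp}$ is actually needed, so the converse inclusion you prove via compatibility extension is superfluous here.
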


\begin{proof}
Let $X=\rho^c(p)$. The inclusion $X\subseteq X^{\perp\perp}$ holds in every orthoframe. Suppose $p\notin u$. Lemma~\ref{lem:double-negation-pairs} gives $\neg\neg p\notin u$. Apply Lemma~\ref{lem:compatibility-extension} to the formula $\neg p$. We obtain $v$ with $R_Q^c(u,v)$ and $\neg p\in v$.

Let $w$ satisfy $R_Q^c(v,w)$. If $p\in w$, Lemma~\ref{lem:double-negation-pairs} gives $\neg\neg p\in w$. Taking $\alpha=\neg p$ contradicts $R_Q^c(v,w)$. Therefore $p\notin w$ for every $R_Q^c$-neighbor $w$ of $v$, so $v\in X^{\perp}$. Since $R_Q^c(u,v)$, it follows that $u\notin X^{\perp\perp}$. Hence $X^{\perp\perp}\subseteq X$.
\end{proof}

We next verify that the canonical modal relation satisfies the forcing condition.

\begin{lemma}[Canonical Forcing]
\label{lem:canonical-forcing}
The relation $R_M^c$ satisfies the forcing condition relative to $R_Q^c$.
\end{lemma}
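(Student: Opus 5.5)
We must show that $R_M^c(u,w)$ and $R_Q^c(u,v)$ together imply $R_M^c(v,w)$. Unfolding the definitions, this means: for every $\alpha$ with $\Box\alpha\in v$, we need $\alpha\in w$. Since $R_M^c(u,w)$ already sends every boxed formula of $u$ into $w$, it suffices to show that $\Box\alpha\in v$ implies $\Box\alpha\in u$ whenever $R_Q^c(u,v)$. In other words, canonical compatibility should preserve boxed formulas. This is the syntactic counterpart of Theorem~\ref{thm:boxed-centrality}, and the natural tool is MEM in the form of Lemma~\ref{lem:partition-saturation}(v).

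I will argue by contraposition. Suppose $\Box\alpha\in v$ but $\Box\alpha\notin u$. By Lemma~\ref{lem:partition-saturation}(v), exactly one of $\Box\alpha$ and $\neg\Box\alpha$ lies in $u$, so $\neg\Box\alpha\in u$. Now apply the definition of $R_Q^c(u,v)$ with the formula $\neg\Box\alpha\in u$: it yields $\neg\neg\Box\alpha\notin v$. By Lemma~\ref{lem:double-negation-pairs}, this gives $\Box\alpha\notin v$, which contradicts the assumption. Hence $\Box\alpha\in u$, and therefore $\alpha\in w$ by $R_M^c(u,w)$. Since $\alpha$ was arbitrary, this establishes $R_M^c(v,w)$.

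The only delicate point is using the correct direction of the compatibility clause. We instantiate it at $\neg\Box\alpha\in u$ rather than at $\Box\alpha\in v$, so we do not need to invoke the symmetry of $R_Q^c$ from Lemma~\ref{lem:canonical-rq}. If the direction turned out wrong, symmetry would still let us swap $u$ and $v$. Apart from this, the argument is a direct combination of MEM saturation and double negation, and I expect no real obstacle.
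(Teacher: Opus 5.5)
Your proposal is correct and matches the paper's proof: you use Lemma~\ref{lem:partition-saturation}(v) to put $\neg\Box\alpha$ in $u$, then instantiate $R_Q^c(u,v)$ at $\neg\Box\alpha$ and apply Lemma~\ref{lem:double-negation-pairs} to obtain the contradiction, which forces $\Box\alpha\in u$ and hence $\alpha\in w$. As you note, this direction of the compatibility clause needs no appeal to the symmetry of $R_Q^c$.
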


\begin{proof}
Assume $R_M^c(u,w)$ and $R_Q^c(u,v)$. We show $R_M^c(v,w)$. Let $\Box\alpha\in v$. If $\Box\alpha\notin u$, Lemma~\ref{lem:partition-saturation}(v) gives $\neg\Box\alpha\in u$. Lemma~\ref{lem:double-negation-pairs} gives $\neg\neg\Box\alpha\in v$. Taking the formula $\neg\Box\alpha$ contradicts $R_Q^c(u,v)$. Therefore $\Box\alpha\in u$, and $R_M^c(u,w)$ gives $\alpha\in w$.
\end{proof}

The forcing property alone does not provide witnesses for failure of a boxed formula. The following extension lemma supplies the required modal successor.

\begin{lemma}[Extension to a Modal Successor]
\label{lem:modal-successor-extension}
\[
\Box\alpha\notin u
\quad\Longrightarrow\quad
\exists v\in W^c\,
\bigl(R_M^c(u,v)\text{ and }\alpha\notin v\bigr).
\]
\end{lemma}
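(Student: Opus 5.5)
We extend the consistent pair $\bigl(\{\beta\mid\Box\beta\in u\},\{\alpha\}\bigr)$ to a maximal pair, which is the standard canonical-model argument adapted to two-coordinate saturation. The conclusion then follows directly. By Lemma~\ref{lem:max-pair-extension}, consistency gives a maximal consistent pair $(v,B)$ extending this pair. Every $\beta$ with $\Box\beta\in u$ lies in $v$, which is exactly $R_M^c(u,v)$. Since $\alpha\in B$, Lemma~\ref{lem:partition-saturation}(i) gives $\alpha\notin v$.

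The main step is consistency. Suppose instead that finite $\Gamma\subseteq\{\beta\mid\Box\beta\in u\}$ and $\Delta\subseteq\{\alpha\}$ satisfy $\Gamma\vdash_{\QMLE}\Delta$. We split into two cases according to whether $\Delta$ is $\{\alpha\}$ or empty.

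In the first case, $\Gamma\vdash_{\QMLE}\alpha$, and rule K gives $\Box\Gamma\vdash_{\QMLE}\Box\alpha$. Every member of $\Box\Gamma$ lies in $u$, so Lemma~\ref{lem:partition-saturation}(iii) gives $\Box\alpha\in u$, contrary to the hypothesis. The subcase $\Gamma=\varnothing$ is covered as well, since K applied to $\varnothing\vdash\alpha$ yields $\varnothing\vdash\Box\alpha$, and the empty antecedent is a finite subset of $u$.

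In the second case, $\Gamma\vdash_{\QMLE}\varnothing$, and WKN adds $\alpha$ on the right to give $\Gamma\vdash_{\QMLE}\alpha$. This reduces to the first case.

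The only delicate point is that K is formulated with exactly one conclusion. This is why the empty succedent is first weakened to the single formula $\alpha$ before K is applied. With that reduction in place, the argument closes without further obstacles. No use of T, 4, B, or MEM is needed, so the lemma holds already at the level of the base calculus.
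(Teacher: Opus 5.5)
Your proposal is correct and follows the paper's proof essentially step for step. You show consistency of $(\{\beta\mid\Box\beta\in u\},\{\alpha\})$ by weakening an empty succedent to $\alpha$, then applying K and Lemma~\ref{lem:partition-saturation}(iii), and you finish by extending to a maximal pair via Lemma~\ref{lem:max-pair-extension}.
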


\begin{proof}
Put \(A_u:=\{\beta\mid\Box\beta\in u\}.\)
We claim that $(A_u,\{\alpha\})$ is a consistent pair. An inconsistency witness has a finite antecedent $\{\beta_1,\ldots,\beta_n\}\subseteq A_u$ and a succedent contained in $\{\alpha\}$. When the succedent is empty, WKN adds $\alpha$. Thus $\beta_1,\ldots,\beta_n\vdash_{\QMLE}\alpha$, where $n$ may be zero. The K rule gives $\Box\beta_1,\ldots,\Box\beta_n\vdash_{\QMLE}\Box\alpha$. Every formula in the antecedent belongs to $u$, and Lemma~\ref{lem:partition-saturation}(iii) gives $\Box\alpha\in u$, contrary to the hypothesis.

Extend $(A_u,\{\alpha\})$ to a maximal pair $(v,B)$. Then $A_u\subseteq v$, so $R_M^c(u,v)$, while $\alpha\in B$ implies $\alpha\notin v$.
\end{proof}

The canonical structure is now equipped with the relations and witness lemmas needed for the truth lemma. It remains to derive the equivalence properties of the modal relation from T, 4, and B.

\subsection{Canonical Equivalence Relation}

The three added axioms establish the frame properties of an equivalence relation in sequence. Specifically, reflexivity is derived from T, transitivity from 4, and symmetry from B.

\begin{lemma}[Canonical Reflexivity]
\label{lem:canonical-reflexivity}
$R_M^c$ is reflexive.
\end{lemma}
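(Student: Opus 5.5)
To prove that $R_M^c$ is reflexive, I would fix an arbitrary canonical world $u\in W^c$ and verify the defining clause of $R_M^c(u,u)$ directly. That is, for every formula $\alpha$ with $\Box\alpha\in u$, I must show $\alpha\in u$.

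The argument uses the axiom T together with the closure properties of maximal consistent pairs.

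First, T provides the derivable sequent $\Box\alpha\vdash_{\QMLE}\alpha$.

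Second, assume $\Box\alpha\in u$. Then the finite antecedent $\{\Box\alpha\}$ is a subset of $u$.

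Third, Lemma~\ref{lem:partition-saturation}(iii), which says that $u$ is closed under derivable consequence from finite subsets, gives $\alpha\in u$. Equivalently, one can apply item (ii) with the one-element succedent $\{\alpha\}$.

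Since $\alpha$ was arbitrary, this is exactly the clause defining $R_M^c(u,u)$.

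I do not expect any genuine obstacle. The only point to check is that membership in a world means membership in the left coordinate of a maximal consistent pair. This holds because $W^c$ was defined as the set of such left coordinates, and the right coordinate is their complement by Lemma~\ref{lem:partition-saturation}(i). So item (iii) applies verbatim, and no use of the forcing condition, compatibility, or double negation is needed.
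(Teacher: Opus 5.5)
Your proposal is correct and takes the same approach as the paper's proof: from $\Box\alpha\in u$, axiom T gives $\Box\alpha\vdash_{\QMLE}\alpha$. Lemma~\ref{lem:partition-saturation}(iii) then yields $\alpha\in u$, which is exactly the clause defining $R_M^c(u,u)$.
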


\begin{proof}
Let $\Box\alpha\in u$. Axiom T and Lemma~\ref{lem:partition-saturation}(iii) give $\alpha\in u$. Hence $R_M^c(u,u)$.
\end{proof}

\begin{lemma}[Canonical Transitivity]
\label{lem:canonical-transitivity}
$R_M^c$ is transitive.
\end{lemma}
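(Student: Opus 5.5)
The plan is to unwind the definition of $R_M^c$ directly and use axiom 4 together with the saturation property of maximal pairs. Assume $R_M^c(u,v)$ and $R_M^c(v,w)$. We must show $R_M^c(u,w)$, that is, for every formula $\alpha$, if $\Box\alpha\in u$ then $\alpha\in w$.

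The argument runs in three steps.
\begin{enumerate}[(1)]
\item Fix $\alpha$ with $\Box\alpha\in u$. Axiom 4 gives the derivable sequent $\Box\alpha\vdash_{\QMLE}\Box\Box\alpha$, whose antecedent $\{\Box\alpha\}$ is a finite subset of $u$. Lemma~\ref{lem:partition-saturation}(iii) therefore yields $\Box\Box\alpha\in u$.
\item Apply the definition of $R_M^c(u,v)$ to the formula $\Box\alpha$. Since $\Box\Box\alpha\in u$, we obtain $\Box\alpha\in v$.
\item Apply the definition of $R_M^c(v,w)$ to the formula $\alpha$. This gives $\alpha\in w$.
\end{enumerate}
Since $\alpha$ was arbitrary, $R_M^c(u,w)$ holds.

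There is no substantive obstacle. The only point needing care is that Lemma~\ref{lem:partition-saturation}(iii) is stated for $A\vdash_{\QMLE}\alpha$, meaning derivability from some finite subset of $A$. The single-premise instance of axiom 4 falls directly under this clause, so no CUT or weakening is required beyond what the lemma already packages. The argument also does not use the forcing condition or the compatibility relation; it is the standard canonical-model argument for 4.
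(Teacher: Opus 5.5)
Your proposal is correct and is essentially the paper's proof: axiom 4 together with Lemma~\ref{lem:partition-saturation}(iii) places $\Box\Box\alpha$ in $u$. The two canonical accessibility steps then give $\Box\alpha\in v$ and $\alpha\in w$.
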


\begin{proof}
Assume $R_M^c(u,v)$ and $R_M^c(v,w)$. Let $\Box\alpha\in u$. Axiom 4 gives $\Box\Box\alpha\in u$. Hence $\Box\alpha\in v$, and then $\alpha\in w$. Therefore $R_M^c(u,w)$.
\end{proof}

Symmetry, unlike reflexivity and transitivity, requires the interaction of B with canonical compatibility and the forcing condition.

\begin{lemma}[Canonical Symmetry]
\label{lem:canonical-symmetry}
$R_M^c$ is symmetric.
\end{lemma}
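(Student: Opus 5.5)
The plan is to show that $R_M^c(u,v)$ implies $R_M^c(v,u)$: assume $R_M^c(u,v)$ and $\Box\alpha\in v$, and aim for $\alpha\in u$. The route goes through the boxed formula first, establishing $\Box\alpha\in u$ and then applying axiom T with Lemma~\ref{lem:partition-saturation}(iii) to get $\alpha\in u$. This detour is needed because in orthologic $\alpha\notin u$ does not yield $\neg\alpha\in u$. Boxed formulas, however, are governed by MEM through Lemma~\ref{lem:partition-saturation}(v), so $\Box\alpha\notin u$ does give $\neg\Box\alpha\in u$.

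Step one: suppose for contradiction that $\Box\alpha\notin u$. By Lemma~\ref{lem:partition-saturation}(v), $\neg\Box\alpha\in u$.

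Step two: apply axiom B to $\beta=\neg\Box\alpha$. Recalling $\Diamond\beta=\neg\Box\neg\beta$, this gives $\neg\Box\alpha\vdash\Box\neg\Box\neg\neg\Box\alpha$. Lemma~\ref{lem:partition-saturation}(iii) then yields $\Box\neg\Box\neg\neg\Box\alpha\in u$. The definition of $R_M^c(u,v)$ gives $\neg\Box\neg\neg\Box\alpha\in v$.

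Step three: derive $\Box\neg\neg\Box\alpha\in v$ from $\Box\alpha\in v$. Axiom 4 gives $\Box\alpha\vdash\Box\Box\alpha$. AX with $\neg\neg$r gives $\Box\alpha\vdash\neg\neg\Box\alpha$, and K turns this into $\Box\Box\alpha\vdash\Box\neg\neg\Box\alpha$. CUT combines the two into $\Box\alpha\vdash\Box\neg\neg\Box\alpha$, and Lemma~\ref{lem:partition-saturation}(iii) places $\Box\neg\neg\Box\alpha$ in $v$. Now $v$ contains both $\gamma=\Box\neg\neg\Box\alpha$ and $\neg\gamma$, contradicting Lemma~\ref{lem:partition-saturation}(iv). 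Hence $\Box\alpha\in u$, and T finishes the argument.

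The main obstacle is the bookkeeping of the double negation inside the diamond. B produces $\Box\neg\Box\neg\neg\Box\alpha$, while the hypothesis supplies only $\Box\alpha$, so the two must be matched at the same syntactic form. I would handle this inside the box via 4 and K, as above. Alternatively, one could use Lemma~\ref{lem:double-negation-pairs} to pass from $\Box\alpha$ to $\neg\neg\Box\alpha$ in successor worlds, but that does not directly give a boxed formula in $v$, so the derivational route through 4 is the one I would commit to.
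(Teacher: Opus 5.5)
Your proof is correct, but it takes a different route from the paper's.

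\textbf{The paper's route.} The paper argues directly on $\alpha$. From $\alpha\notin u$ it uses Lemma~\ref{lem:double-negation-pairs} and Lemma~\ref{lem:compatibility-extension} to find a compatible witness $w$ with $R_Q^c(u,w)$ and $\neg\alpha\in w$. It then moves the modal edge to $R_M^c(w,v)$ by canonical forcing (Lemma~\ref{lem:canonical-forcing}). Applying B to $\neg\alpha$ at $w$ gives $\neg\Box\neg\neg\alpha\in v$. This clashes with $\Box\neg\neg\alpha\in v$, which K yields from $\Box\alpha\in v$. The paper thus gets around the lack of negation completeness by passing to a compatible world, mirroring the semantic soundness proof of B.

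\textbf{Your route.} You get around it by staying with boxed formulas, where MEM (through Lemma~\ref{lem:partition-saturation}(v)) gives classical behaviour. This lets you bypass the compatibility-extension and canonical-forcing lemmas. The price is that you use T and 4 as well as B. In effect you prove the Euclidean-type transfer $\Box\alpha\in v\Rightarrow\Box\alpha\in u$ along $R_M^c(u,v)$, which is the classical derivation of 5 from B and 4, and then finish with T.

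\textbf{What each buys.} MEM enters both arguments; in the paper it is hidden inside canonical forcing. The paper's argument uses B as the only one of the three added axioms. It therefore supports the stated correspondence ``symmetry from B'' and would still give a symmetric canonical relation for $\QML$ extended by B alone. Your argument is tied to the full calculus $\QMLE$. That is enough for the lemma as stated, but it is less modular.
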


\begin{proof}
Assume $R_M^c(u,v)$. To prove $R_M^c(v,u)$, let $\Box\alpha\in v$. Suppose, towards a contradiction, that $\alpha\notin u$. Lemma~\ref{lem:double-negation-pairs} gives $\neg\neg\alpha\notin u$. Apply Lemma~\ref{lem:compatibility-extension} to $\neg\alpha$. Thus there is a world $w\in W^c$ such that \(R_Q^c(u,w)\text{ and }\neg\alpha\in w.\)
Lemma~\ref{lem:canonical-forcing} gives $R_M^c(w,v)$.

Axiom B, applied to $\neg\alpha$, yields \(\neg\alpha\vdash_{\QMLE}\Box\Diamond\neg\alpha.\)
Thus $\Box\Diamond\neg\alpha\in w$, and $R_M^c(w,v)$ gives \(\Diamond\neg\alpha\in v.\)
By abbreviation, this is \(\neg\Box\neg\neg\alpha\in v.\)
Also, $\alpha\vdash_{\QMLE}\neg\neg\alpha$, and K gives \(\Box\alpha\vdash_{\QMLE}\Box\neg\neg\alpha.\)
Since $\Box\alpha\in v$, we obtain $\Box\neg\neg\alpha\in v$. This contradicts Lemma~\ref{lem:partition-saturation}(iv). Hence $\alpha\in u$.
\end{proof}

The preceding lemmas place the canonical modal relation in the intended equivalence frame class.

\begin{corollary}[Canonical Frame Membership]
\label{cor:canonical-in-E}
The canonical structure
\[
\mathcal{M}^c=(W^c,R_Q^c,R_M^c,\rho^c)
\]
belongs to $\mathsf{E}$.
\end{corollary}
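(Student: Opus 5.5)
Membership in $\mathsf{E}$ requires three things: $\mathcal{M}^c$ is a quantum modal structure, and $R_M^c$ is an equivalence relation. The plan is to collect the preceding lemmas clause by clause against the definitions, with no new construction.

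First, $(W^c,R_Q^c)$ is an orthoframe. The remark on canonical worlds gives $W^c\neq\varnothing$, since $(\varnothing,\varnothing)$ is consistent by soundness and extends by Lemma~\ref{lem:max-pair-extension}. Lemma~\ref{lem:canonical-rq} gives reflexivity and symmetry of $R_Q^c$.

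Second, the remaining clauses of a quantum modal structure hold. Lemma~\ref{lem:canonical-forcing} gives the forcing condition for $R_M^c$ relative to $R_Q^c$. Lemma~\ref{lem:canonical-valuation-stable} gives $\rho^c(p)\in\St(W^c,R_Q^c)$ for every atomic $p$. Hence $\mathcal{M}^c$ is a quantum modal structure.

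Third, $R_M^c$ is an equivalence relation. Lemma~\ref{lem:canonical-reflexivity} gives reflexivity from T, and Lemma~\ref{lem:canonical-transitivity} gives transitivity from 4. Lemma~\ref{lem:canonical-symmetry} gives symmetry from B, using the forcing condition and the compatibility witnesses.

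No step is a genuine obstacle, since all the work lies in the cited lemmas. The one point to check is that Lemma~\ref{lem:canonical-symmetry} invokes Lemma~\ref{lem:canonical-forcing}, which holds without any equivalence hypothesis. So there is no circularity, and combining the three parts gives $\mathcal{M}^c\in\mathsf{E}$.
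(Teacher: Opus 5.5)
Your proposal is correct and matches the paper's proof, which cites the same six lemmas: the canonical compatibility relation, stability of the canonical valuation, canonical forcing, and reflexivity, transitivity, and symmetry of $R_M^c$. Your explicit check that $W^c$ is nonempty, via the preceding remark on canonical worlds, is a worthwhile addition that the paper's one-line proof leaves implicit.
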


\begin{proof}
Use Lemmas~\ref{lem:canonical-rq}, \ref{lem:canonical-valuation-stable}, \ref{lem:canonical-forcing}, \ref{lem:canonical-reflexivity}, \ref{lem:canonical-transitivity}, and \ref{lem:canonical-symmetry}.
\end{proof}

\subsection{Canonical Completeness}

\begin{theorem}[Truth Lemma]
\label{thm:truth-lemma}
For every formula $\alpha$ and every canonical world $u$,
\[
u\models_{\mathcal{M}^c}\alpha
\quad\Longleftrightarrow\quad
\alpha\in u.
\]
\end{theorem}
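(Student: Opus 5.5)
The proof will be by induction on the construction of $\alpha$, using only the canonical lemmas already established. The atomic case is the definition of $\rho^c$. The conjunction case reduces to showing $\alpha\wedge\beta\in u$ if and only if $\alpha\in u$ and $\beta\in u$. The forward direction uses $\alpha\wedge\beta\vdash\alpha$ and $\alpha\wedge\beta\vdash\beta$, each obtained from AX and $\wedge$l$_1$ or $\wedge$l$_2$. The backward direction uses $\alpha,\beta\vdash\alpha\wedge\beta$, obtained from AX, WKN and $\wedge$r. In both directions Lemma~\ref{lem:partition-saturation}(iii) transfers membership.

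For negation, I must show $\neg\alpha\in u$ if and only if every $v$ with $R_Q^c(u,v)$ satisfies $v\not\models\alpha$. By the induction hypothesis, the latter condition means $\alpha\notin v$ for every such $v$.
\begin{itemize}
\item If $\neg\alpha\in u$ and $R_Q^c(u,v)$, symmetry (Lemma~\ref{lem:canonical-rq}) gives $R_Q^c(v,u)$. If $\alpha\in v$, the definition of $R_Q^c(v,u)$ gives $\neg\alpha\notin u$, a contradiction. So $\alpha\notin v$.
\item If $\neg\alpha\notin u$, Lemma~\ref{lem:compatibility-extension} supplies $v$ with $R_Q^c(u,v)$ and $\alpha\in v$. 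The induction hypothesis then gives $v\models\alpha$, so $u\not\models\neg\alpha$.
\end{itemize}

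For the box, I must show $\Box\alpha\in u$ if and only if $\alpha\in v$ for every $v$ with $R_M^c(u,v)$.
\begin{itemize}
\item The forward direction is the definition of $R_M^c$.
\item For the converse, if $\Box\alpha\notin u$, Lemma~\ref{lem:modal-successor-extension} gives an $R_M^c$-successor $v$ with $\alpha\notin v$. The induction hypothesis makes $v\not\models\alpha$, so $u\not\models\Box\alpha$.
\end{itemize}

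Most of the real work is already in the witness lemmas, so the only step needing care is negation. There one must use symmetry of $R_Q^c$ correctly, because the definition of $R_Q^c(u,v)$ speaks of $\alpha\in u$ and $\neg\alpha\notin v$, which is oriented opposite to what the truth clause needs. Invoking Lemma~\ref{lem:canonical-rq} resolves this orientation.

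I will also confirm that the induction is legitimate. The truth clauses are evaluated in $\mathcal{M}^c$, which is a quantum modal structure by Corollary~\ref{cor:canonical-in-E}. Moreover, $\vee$ and $\Diamond$ are abbreviations, so no further cases arise.
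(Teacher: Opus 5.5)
Your proposal is correct and follows the paper's proof case by case, using the same witness lemmas (Lemmas~\ref{lem:compatibility-extension} and~\ref{lem:modal-successor-extension}) for the converse directions. The only variation is in the forward negation direction: the paper instantiates the definition of $R_Q^c(u,v)$ at the formula $\neg\beta$ to get $\neg\neg\beta\notin v$ and then applies Lemma~\ref{lem:double-negation-pairs}, whereas you appeal to Lemma~\ref{lem:canonical-rq}; this is equivalent, since that symmetry lemma is proved by the same double-negation step, so symmetry is a convenience here rather than a necessity.
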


\begin{proof}
We proceed by induction on $\alpha$.

\emph{Base case.}
If $\alpha$ is atomic, the claim follows from the definition of $\rho^c$.

\emph{Induction step.}
Assume that the claim holds for the immediate subformulas of $\alpha$.

\emph{Case $\alpha=\beta\wedge\gamma$.}
If $\beta\wedge\gamma\in u$, the derivations $\beta\wedge\gamma\vdash_{\QMLE}\beta$ and $\beta\wedge\gamma\vdash_{\QMLE}\gamma$ place both $\beta$ and $\gamma$ in $u$. Conversely, if $\beta,\gamma\in u$, the rule $\wedge$r applied to weakened identity sequents gives $\beta,\gamma\vdash_{\QMLE}\beta\wedge\gamma$, so $\beta\wedge\gamma\in u$. The induction hypothesis completes this case.

\emph{Case $\alpha=\neg\beta$.}
Suppose $\neg\beta\in u$. If $R_Q^c(u,v)$, then $\beta\notin v$. Otherwise, Lemma~\ref{lem:double-negation-pairs} would give $\neg\neg\beta\in v$, contrary to $R_Q^c(u,v)$ and $\neg\beta\in u$. The induction hypothesis gives $v\not\models\beta$. Hence $u\models\neg\beta$. Conversely, suppose $\neg\beta\notin u$. Lemma~\ref{lem:compatibility-extension} gives $v$ such that $R_Q^c(u,v)$ and $\beta\in v$. By the induction hypothesis, $v\models\beta$, so $u\not\models\neg\beta$.

\emph{Case $\alpha=\Box\beta$.}
Suppose $\Box\beta\in u$. If $R_M^c(u,v)$, then $\beta\in v$ by the definition of $R_M^c$. The induction hypothesis gives $v\models\beta$. Hence $u\models\Box\beta$. Conversely, suppose $\Box\beta\notin u$. Lemma~\ref{lem:modal-successor-extension} gives $v$ such that $R_M^c(u,v)$ and $\beta\notin v$. By the induction hypothesis, $v\not\models\beta$, so $u\not\models\Box\beta$.
\end{proof}
With the truth lemma established, failure of derivability can now be converted into a countermodel in the canonical frame.

\begin{theorem}[Canonical Completeness]
\label{thm:canonical-completeness}
For every set $\Gamma$ of formulas and every formula $\alpha$,
\[
\Gamma\models_{\mathsf{E}}\alpha
\quad\Longrightarrow\quad
\Gamma\vdash_{\QMLE}\alpha.
\]
\end{theorem}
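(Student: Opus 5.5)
We argue by contraposition. Assume $\Gamma\nvdash_{\QMLE}\alpha$, so that no finite $\Gamma_0\subseteq\Gamma$ satisfies $\Gamma_0\vdash_{\QMLE}\alpha$. The aim is to produce a world of the canonical structure $\mathcal{M}^c$ that satisfies every member of $\Gamma$ and fails $\alpha$. Since $\mathcal{M}^c\in\mathsf{E}$ by Corollary~\ref{cor:canonical-in-E}, this will give $\Gamma\not\models_{\mathsf{E}}\alpha$.

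The first step is to check that the pair $(\Gamma,\{\alpha\})$ is consistent in the sense of the definition of consistent pairs. Suppose finite sets $\Gamma_0\subseteq\Gamma$ and $\Delta\subseteq\{\alpha\}$ satisfied $\Gamma_0\vdash_{\QMLE}\Delta$. If $\Delta=\{\alpha\}$, this contradicts the assumption directly. If $\Delta=\varnothing$, an application of WKN yields $\Gamma_0\vdash_{\QMLE}\alpha$, which is again excluded. Hence the pair is consistent.

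The second step applies Lemma~\ref{lem:max-pair-extension} to extend $(\Gamma,\{\alpha\})$ to a maximal consistent pair $(u,B)$. By construction $\Gamma\subseteq u$ and $\alpha\in B$. Lemma~\ref{lem:partition-saturation}(i) shows that the coordinates are disjoint, so $\alpha\notin u$. The left coordinate $u$ is therefore a world of $W^c$.

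The third step invokes the Truth Lemma, Theorem~\ref{thm:truth-lemma}. It gives $u\models_{\mathcal{M}^c}\gamma$ for every $\gamma\in\Gamma$ and $u\not\models_{\mathcal{M}^c}\alpha$. Combined with Corollary~\ref{cor:canonical-in-E}, this shows that $\mathcal{M}^c$ is a member of $\mathsf{E}$ witnessing $\Gamma\not\models_{\mathsf{E}}\alpha$.

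There is no substantial obstacle, since the heavy work is already contained in the truth lemma and the equivalence properties of the canonical relation. The only points needing care are the empty-succedent case in the consistency check, which WKN handles, and the fact that arbitrary, possibly infinite, $\Gamma$ is covered by the finitary definition of consistency. As a remark, Theorem~\ref{thm:logic-coincidence} then transfers this completeness result to $\mathsf{C}$ and $\mathsf{U}$.
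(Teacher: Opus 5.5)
Your proposal is correct and follows the paper's proof essentially step for step. You check that $(\Gamma,\{\alpha\})$ is consistent, with WKN handling the empty succedent; extend it to a maximal pair $(u,B)$; use disjointness of the coordinates to get $\alpha\notin u$; and conclude with the truth lemma together with $\mathcal{M}^c\in\mathsf{E}$.
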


\begin{proof}
Suppose $\Gamma\not\vdash_{\QMLE}\alpha$. We verify that $(\Gamma,\{\alpha\})$ is consistent. An inconsistency witness would consist of a finite $\Gamma_0\subseteq\Gamma$ and a succedent $\Delta\subseteq\{\alpha\}$ such that $\Gamma_0\vdash_{\QMLE}\Delta$. If $\Delta=\{\alpha\}$, the definition of derivability from an arbitrary set gives $\Gamma\vdash_{\QMLE}\alpha$. If $\Delta=\varnothing$, WKN gives $\Gamma_0\vdash_{\QMLE}\alpha$, and the same definition again gives $\Gamma\vdash_{\QMLE}\alpha$. Both cases contradict the hypothesis.

Extend $(\Gamma,\{\alpha\})$ to a maximal pair $(u,B)$. Then $\Gamma\subseteq u$ and $\alpha\in B$, so $\alpha\notin u$. By the truth lemma, $u\models\Gamma$ and $u\not\models\alpha$ in the canonical model. Corollary~\ref{cor:canonical-in-E} places that model in $\mathsf{E}$. Therefore $\Gamma\not\models_{\mathsf{E}}\alpha$.
\end{proof}

The final step combines canonical completeness with soundness and the coincidence of the three frame logics.

\begin{theorem}[Component Completeness]
\label{thm:component-completeness}
For every $\Gamma$ and $\alpha$,
\[
\Gamma\vdash_{\QMLE}\alpha
\quad\Longleftrightarrow\quad
\Gamma\models_{\mathsf{C}}\alpha.
\]
The same equivalence holds with $\mathsf{C}$ replaced by $\mathsf{E}$ or $\mathsf{U}$.
\end{theorem}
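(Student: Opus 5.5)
The proof assembles three results already in hand: soundness (Theorem~\ref{thm:soundness-eq}), canonical completeness (Theorem~\ref{thm:canonical-completeness}), and the coincidence of the three frame logics (Theorem~\ref{thm:logic-coincidence}). The plan is to prove the equivalence first for $\mathsf{E}$, then transfer it to $\mathsf{C}$ and $\mathsf{U}$.

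\emph{The class $\mathsf{E}$.} For the direction from derivability to validity, suppose $\Gamma\vdash_{\QMLE}\alpha$. By definition, some finite $\Gamma_0\subseteq\Gamma$ satisfies $\Gamma_0\vdash_{\QMLE}\alpha$. Theorem~\ref{thm:soundness-eq} shows that this one-conclusion sequent is locally valid in every member of $\mathsf{E}$. Any world satisfying all of $\Gamma$ satisfies $\Gamma_0$, and hence satisfies $\alpha$, so $\Gamma\models_{\mathsf{E}}\alpha$. The converse direction is exactly Theorem~\ref{thm:canonical-completeness}.

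\emph{The classes $\mathsf{C}$ and $\mathsf{U}$.} Theorem~\ref{thm:logic-coincidence} gives, for arbitrary sets $\Gamma$, that $\Gamma\models_{\mathsf{E}}\alpha$, $\Gamma\models_{\mathsf{C}}\alpha$, and $\Gamma\models_{\mathsf{U}}\alpha$ are equivalent. Chaining this with the equivalence for $\mathsf{E}$ yields the statement for $\mathsf{C}$ and for $\mathsf{U}$. Since that theorem is stated for arbitrary sets rather than only for finite antecedents, no compactness argument is needed.

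There is no real obstacle here. The only point to check is that the definition of $\Gamma\vdash_{\QMLE}\alpha$ for infinite $\Gamma$, via a finite subset, matches the semantic consequence relation used in each cited theorem. It does, because soundness is applied to the finite $\Gamma_0$ and monotonicity then passes the conclusion up to $\Gamma$.
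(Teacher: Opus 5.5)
Your proposal is correct and follows the paper's own proof. Like the paper, you combine soundness and canonical completeness for $\mathsf{E}$, then use Theorem~\ref{thm:logic-coincidence} to transfer the equivalence to $\mathsf{C}$ and $\mathsf{U}$; your unpacking of $\Gamma\vdash_{\QMLE}\alpha$ through a finite $\Gamma_0$ is harmless, since the ``in particular'' clause of Theorem~\ref{thm:soundness-eq} already records it.
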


\begin{proof}
Soundness for $\mathsf{E}$ is Theorem~\ref{thm:soundness-eq}, and completeness is Theorem~\ref{thm:canonical-completeness}. The logic coincidence theorem transfers the equivalence to $\mathsf{C}$ and $\mathsf{U}$.
\end{proof}

\section{Algebraic Scope}
\label{sec:comparison}

The relational results are best understood through the algebra of stable sets and through the wider theory of modal semantics formulated with compatibility relations. This section places the forcing condition within those frameworks and then states its expressive boundary.

\subsection{Modal Semantics with Compatibility}

Compatibility--accessibility frames provide relational representations for complete lattices equipped with unary modalities, including complete ortholattices as a distinguished case \cite{Holliday2022}. The forcing condition is a special case of the interaction conditions available in that framework. Theorem~\ref{thm:component-invariance} gives a complete structural description of this case by showing that accessibility successors are constant on compatibility components. Theorem~\ref{thm:boxed-centrality} then places every boxed truth set in the component Boolean algebra.

Possibility semantics based on compatibility has also been used to develop modal orthologic for epistemic modalities \cite{HollidayMandelkern2024}. That work establishes algebraic and relational semantics for a modal orthologic consequence relation. The present component semantics adds the classification of accessibility relations satisfying the forcing condition, the connectedization obtained by adding one point, and the coincidence of the frame logics in Theorem~\ref{thm:logic-coincidence}.

\subsection{Modal Correspondence over Graphs}

Correspondence theory over graphs and polarities provides first-order analyses of modal reduction principles across non-distributive relational semantics \cite{ConradieEtAl2024}. This general setting includes principles associated with T, 4, and B. In the present orthoframe semantics, these principles are combined with the forcing condition and local validity for sequents with multiple conclusions. The resulting canonical relation is an equivalence relation satisfying the forcing condition, and Theorem~\ref{thm:logic-coincidence} transfers its logic for sequents with one conclusion to the exact component relation. The classical theory of equivalence frames and universal relations supplies the corresponding modal background \cite{BlackburnDeRijkeVenema2001}.

\subsection{Quantum Logic with Central Modalities}

Algebraic modal quantum logics study operators that assign to an orthomodular proposition the greatest central proposition below it \cite{DomenechFreytesDeRonde2009}. Modal enrichment has also been used to formulate contextuality results over orthomodular structures \cite{DomenechFreytesDeRonde2008}. The general component box constructed here returns the greatest component proposition below a stable set. Proposition~\ref{prop:component-center} shows that component propositions form a complete Boolean subalgebra of the center. Equality with the full center requires each compatibility component to have trivial center.

The hard superselection models in Section~\ref{sec:component-modality} meet this condition. Proposition~\ref{prop:hilbert-stable-product} gives an isomorphism between the stable set ortholattice and the product of the sector lattices. It also shows that the component Boolean algebra is the center of that product. In this model class, Theorem~\ref{thm:component-interior-cover} gives
\[
\Box_CX=\max\{Y\in Z(\St(W,R_Q))\mid Y\subseteq X\}
\]
and
\[
\Diamond_CX=\min\{Y\in Z(\St(W,R_Q))\mid X\subseteq Y\}.
\]

The present relational analysis contributes a classification of the accessibility relations satisfying the forcing condition, a logic coincidence theorem for component semantics, and a complete sequent calculus based on local validity. Together, these results explain the frame-theoretic source of the Boolean modal range and provide a proof theory for that semantics.

The hard superselection restriction remains essential. The full projective space of a Hilbert direct sum contains rays that cross the summands, and the corresponding Hilbert lattice has trivial center. The sector product arises from a state space and proposition lattice that already respect the superselection decomposition.

\subsection{Expressive Boundary}

Theorem~\ref{thm:component-invariance} gives a direct limit on interpretation. Within one compatibility component, a relation satisfying the forcing condition cannot distinguish initial states. On a connected projective Hilbert frame, it has a single successor set shared by every world. Exact unitary graphs and ordinary measurement updates whose action depends on the initial state therefore fall outside the present frame class in nondegenerate connected models.

The dual $\Diamond\alpha=\neg\Box\neg\alpha$ is determined by orthologic negation. For the component relation it gives the component cover and supports the B axiom. It does not generally express the existence of an accessible world satisfying $\alpha$. Applications involving outcome executability require a separate semantic operation or a weaker interaction condition.

Reset relations of the form $W\times D$, whose successor set does not depend on the initial state, remain natural examples on connected frames. A systematic treatment would require a multimodal language and explicit interaction principles between component and reset modalities. Such an extension lies beyond the present completeness theorem.

\section{Conclusion}

The forcing condition has a precise componentwise meaning. It makes modal successor sets invariant throughout each compatibility component and places every boxed proposition in the component Boolean algebra. The exact component relation realizes the associated interior and cover operations. In hard superselection models, these operations coincide with the greatest central lower approximation and the least central upper approximation.

The local interpretation of sequents with multiple conclusions provides the semantic form required by modal excluded middle on frames with several components. Maximal consistent pairs support a canonical model without imposing negation completeness on quantum propositions. The canonical accessibility relation satisfies the forcing condition, and the axioms T, 4, and B make it reflexive, transitive, and symmetric.

The main transfer result shows that exact component frames have the same logic for sequents with one conclusion as equivalence frames satisfying the forcing condition and connected frames carrying universal modal accessibility. The connectedization obtained by adding one point is the key step in this identification. Canonical completeness for equivalence frames satisfying the forcing condition therefore yields completeness for the component modality.

The resulting framework is a logic of modal certainty determined by compatibility components. Its scope includes central approximations and hard superselection semantics. Quantum dynamics whose action depends on the initial state requires a different interaction between compatibility and accessibility. Future work should examine algebraic representation beyond the case in which each component has trivial center, conservativity over the nonmodal calculus, and decidability of the resulting system.

\appendix
\section{Local Soundness for the Base Calculus}
\label{app:base-soundness}

This appendix presents the local soundness checks used in Theorem~\ref{thm:soundness-eq}. Let $\mathcal{M}$ be a quantum modal structure.

\begin{proposition}[Local Soundness for the Nonmodal Base Calculus]
Every nonmodal axiom and rule of the base calculus preserves local validity.
\end{proposition}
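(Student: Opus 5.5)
The plan is to go through the nonmodal axioms and rules of the base calculus one by one. For each rule, I will show that local validity of the premises in $\mathcal{M}$ implies local validity of the conclusion in $\mathcal{M}$. Throughout I fix a world $w$ satisfying the antecedent of the conclusion and look for a succedent formula true at $w$. The only semantic facts needed are the truth clauses, Proposition~\ref{prop:formula-stability} (every truth set is stable), and reflexivity and symmetry of $R_Q$.

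\textbf{Purely structural cases.} AX and WKN are immediate. For CUT, let $w\models\Gamma_1,\Gamma_2$. The first premise gives either some $\delta\in\Delta_1$ true at $w$, and then we are done, or $w\models\alpha$. In the latter case $w\models\alpha,\Gamma_2$, so the second premise yields a member of $\Delta_2$ true at $w$. The world-dependent choice is harmless because everything happens at the single world $w$.

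\textbf{Conjunction cases.} Both $\wedge$l rules follow from the conjunction clause. For $\wedge$r, apply each premise at $w$. If either premise yields a member of $\Delta$, we are done. Otherwise $w\models\alpha$ and $w\models\beta$, hence $w\models\alpha\wedge\beta$.

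\textbf{Negation rules.} For $\neg$l, let $w\models\neg\alpha,\Gamma$. The premise gives either some $\delta\in\Delta$ or $w\models\alpha$. The latter is impossible: reflexivity gives $R_Q(w,w)$, so $w\models\neg\alpha$ forces $w\not\models\alpha$. For $\neg\neg$l and $\neg\neg$r, the truth set of $\neg\neg\alpha$ is $\llbracket\alpha\rrbracket^{\perp\perp}$, which equals $\llbracket\alpha\rrbracket$ by Proposition~\ref{prop:formula-stability}. So these rules just replace a formula by one with the same truth set.

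\textbf{The main obstacle: $\neg$r.} This is the only rule whose conclusion is not a pointwise consequence of its premise at the same world. Assume $\alpha\vdash\Delta$ is locally valid and let $w\models\neg\delta$ for every $\delta\in\Delta$. To show $w\models\neg\alpha$, take any $v$ with $R_Q(w,v)$ and suppose $v\models\alpha$. By the premise, some $\delta\in\Delta$ satisfies $v\models\delta$. By symmetry $R_Q(v,w)$, so $w$ has an $R_Q$-neighbour satisfying $\delta$, which contradicts $w\models\neg\delta$. Hence $v\not\models\alpha$, and $w\models\neg\alpha$.

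The point I will stress is that local validity suffices here, even though the choice of $\delta$ depends on $v$. This works because every formula of $\neg\Delta$ holds at $w$, so whichever $\delta$ is chosen at $v$ yields the contradiction. The case $\Delta=\varnothing$ needs a separate check: the premise then says no world satisfies $\alpha$, and the conclusion $\varnothing\vdash\neg\alpha$ holds trivially. Apart from this case, the proof is a routine verification of the clauses.
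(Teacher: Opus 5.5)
Your proposal is correct and matches the paper's proof case by case. This includes singling out $\neg$r as the one rule where the premise is applied at a neighbouring world $v$ rather than at $w$, and using Proposition~\ref{prop:formula-stability} for the double negation rules. Two small simplifications are available. The appeal to symmetry in $\neg$r is unnecessary, because $R_Q(w,v)$ and $w\models\neg\delta$ already give $v\not\models\delta$ directly by the truth clause. The case $\Delta=\varnothing$ needs no separate check, since the general argument covers it: the premise would have to produce a member of an empty set.
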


\begin{proof}
AX is immediate. WKN is immediate because adding premises narrows the worlds under consideration and adding conclusions enlarges the available alternatives.

For CUT, suppose both upper sequents are valid and let $w$ satisfy $\Gamma_1\cup\Gamma_2$. The first upper sequent gives a formula in $\Delta_1$ at $w$ or gives $\alpha$. In the first case the lower sequent holds. In the second case, the second upper sequent gives a formula in $\Delta_2$, so the lower sequent again holds.

The two conjunction left rules follow from the truth clause for conjunction. For conjunction right, let $w$ satisfy $\Gamma$. The first premise gives a member of $\Delta$ or $\alpha$, and the second gives a member of $\Delta$ or $\beta$. If either premise gives a member of $\Delta$, the lower sequent holds. Otherwise both $\alpha$ and $\beta$ hold, so $\alpha\wedge\beta$ holds.

For $\neg$l, suppose $w\models\neg\alpha$ and $w\models\Gamma$. The premise gives a member of $\Delta$ or $\alpha$. Reflexivity of $R_Q$ makes $\alpha$ incompatible with $\neg\alpha$ at the same world, so the latter alternative is impossible.

For $\neg$r, suppose $w\models\neg\Delta$. Let $v$ satisfy $R_Q(w,v)$. If $v\models\alpha$, validity of the premise gives some $\delta\in\Delta$ with $v\models\delta$. This contradicts $w\models\neg\delta$. Thus every $R_Q$-neighbor of $w$ fails $\alpha$, and $w\models\neg\alpha$.

The double negation rules are sound because every formula truth set is stable. Hence $\llbracket\alpha\rrbracket=\llbracket\neg\neg\alpha\rrbracket$.
\end{proof}

\section*{Funding}

This work was supported by Japan Society for the Promotion of Science (JSPS) KAKENHI [Grant Number JP24K03372].

\end{document}